\newcommand{\mbb}[1]{\mathbbm{#1}}
\newcommand{\ket}[1]{|#1\rangle}
\newcommand{\comment}[1]{}
\newcommand{\Mink}{\mbb R^{1,3}}
\newcommand{\SUtwoC}{SU(2)^{\mbb C}}
\newcommand{\be}{\begin{equation}}
\newcommand{\ee}{\end{equation}}
\theoremstyle{plain}
\newtheorem{definition}{Definition}
\newtheorem{theorem}{Theorem}
\newtheorem{proposition}[theorem]{Proposition}
\newtheorem{lemma}{Lemma}
\newtheorem{corollary}[theorem]{Corollary}
\begin{document}

\title{\large\bf Holonomy-flux spinfoam amplitude}
\author{Claudio Perini}
\affiliation{Centre de Physique Th\'eorique, Campus de Luminy, Case 907, F-13288 Marseille, EU}
%\affiliation{Centre de Physique Th\'eorique de Luminy, Aix-Marseille University, F-13288 Marseille, EU}
%\affiliation{Institute for Gravitation and the Cosmos, Physics Department, Penn State, University Park, PA 16802-6300, USA}

\date{\small\today}

\begin{abstract}
We introduce a holomorphic representation for the Lorentzian EPRL spinfoam on arbitrary 2-complexes. The representation is obtained via the Ashtekar-Lewandowski-Marolf-Mour\~ao-Thiemann heat kernel coherent state transform. The new variables are classical holonomy-flux phase space variables $(h,X)\simeq \mathcal T^*SU(2)$ of Hamiltonian loop quantum gravity prescribing the holonomies of the Ashtekar connection $A=\Gamma + \gamma K$, and their conjugate gravitational fluxes. For small heat kernel `time' the spinfoam amplitude is peaked on classical space-time geometries, where at most countably many curvatures are allowed for non-zero Barbero-Immirzi parameter. We briefly comment on the possibility to use the alternative flipped classical limit.
\end{abstract}

\pacs{04.60.Pp, 04.60.Gw, 04.60.Nc}

\maketitle

%%%
\section{Introduction}
In this paper we study the Segal-Bargmann coherent state transform of a local $SU(2)$ holonomy formulation of the EPRL Lorentzian spinfoam model \cite{Engle:2007wy}, extending a previous analysis \cite{Bianchi:2010mw} to an arbitrary number of vertices. The holomorphic transform defines a tentative coherent state path integral for loop quantum gravity. The $SU(2)$ holonomy amplitude as well as the holomorphic amplitude are obtained as the composition of local amplitudes with canonical boundary Hilbert spaces associated to each vertex. The local amplitudes are the holomorphic vertex amplitude and the anti-holomorphic face amplitude. The role of the face amplitude is to glue the vertex amplitudes together to form the full amplitude for the 2-complex. The formalism provides a close contact of the spinfoam covariant amplitudes with the Hamiltonian loop quantum gravity framework. 

The aim of this paper is to test the dynamics provided by the spinfoam amplitude using coherent states of loop quantum gravity peaked on large areas. Following Brian Hall \cite{Hall1}, the labels of coherent states are elements in the complexification of $SU(2)$. For this reason, the passage from the $SU(2)$ holonomy amplitude to the coherent state path integral is quite immediate. The Hall coherent states had many applications in Hamiltonian loop quantum gravity \cite{Thiemann:2000bx,Sahlmann:2002qj,Sahlmann:2002qk,Bahr:2006hm}, starting from a seminal paper of Ashtekar, Lewandowski, Marolf, Mour\~ao and Thiemann \cite{Ashtekar:1994nx}, as well as in the covariant spinfoam formalism \cite{Bianchi:2009ky,Bianchi:2010mw,Bianchi:2010zs,Magliaro:2010qz,Bianchi:2011ym,Roken:2010vp}.

The complexified labels $H\in\SUtwoC$ mark a point in the phase space of general relativity with Ashtekar variables, in the following way: any $H$ can be uniquely mapped into a holonomy-flux canonical pair $(h,X)$. For this reason we shall also use the name holonomy-flux representation.

We review and further develop a previous analysis of the holomorphic vertex amplitude \cite{Bianchi:2010mw}. In the semiclassical limit we find the equation
\begin{align}\nonumber
A=\Gamma+\gamma K,
\end{align}
relating the Ashtekar connection to the intrinsic ($\Gamma$) and extrinsic ($K$) curvature of a constant-time hyper-surface, where $\gamma$ is the real Barbero-Immirzi parameter. This is our first result. 

Moreover, we extend our study to a 2-complex with an arbitrary number of vertices. We analyze the anti-holomorphic face amplitude and find that its role is to constrain the set of possible space-time curvatures in the semiclassical limit. The curvature constraint turns out to be very strong in the limit considered here, which is large areas. It turns out that the allowed scalar curvatures $\Theta$ at each face of the foam have the form
\begin{align}\nonumber
\gamma\Theta=0\mod 4\pi,
\end{align}
which is a non-trivial constraint provided that $\gamma$ is not zero. The result is conditional in that we consider large areas \emph{before} taking the integral on the coherent state labels. More precisely, we consider the spinfoam holomorphic partial amplitude, namely the amplitude for a \emph{fixed} decoration of the 2-complex with coherent state labels, and study the peakedness properties of the amplitude for large fluxes, namely for large areas.

We do not know at this stage whether the results of this paper extend automatically to the full amplitude. A more detailed analysis of this subtle point will be given elsewhere.

It was known \cite{Mamone:2009pw,Bonzom:2009hw} that the current spinfoam models with simplicity constraints might suffer from a flatness issue. Instead of yielding Ricci-flat geometries in the classical limit, as expected from the quantum gravity path integral without matter, they might yield completely flat geometries. Our analysis of the holomorphic Lorentzian amplitude at large areas provides a result which is compatible with the recent microlocal analysis of the Euclidean model \cite{Hellmann:2012kz}. We find that the holomorphic partial amplitude is peaked on space-time geometries which are flat, or they possess an accidental curvature in a countable set.

Interestingly, our analysis brakes down in the limit of small Barbero-Immirzi parameter. For vanishing $\gamma$ the full continuous set of curvatures is restored. This is in agreement with a previous argument \cite{Magliaro:2011dz,Magliaro:2011zz} in the flipped semiclassical regime.

The tools used for the proofs are a cocktail of the following techniques: Nottingham asymptotic techniques \cite{Barrett:2009mw}, graviton propagator techniques \cite{Bianchi:2009ri, Rovelli:2011kf}, and the semiclassical analysis of Hall coherent states \cite{Bianchi:2009ky,Bianchi:2010mw}.

The paper is organized as follows. In the next section we review the local $SU(2)$ holonomy formulation of the Lorentzian EPRL model in terms of holonomy vertex and face amplitudes. In section \ref{section: holonomy-flux} we discuss the loop quantum gravity coherent states based on the Brian Hall proposal, their geometric content and in section \ref{section: semiclassicality} their semiclassical properties. In section \ref{section: holomorphic path integral} we build a loop quantum gravity coherent state path integral on a 2-complex via the Segal-Bargman coherent state transform. The transform is performed with respect to the previously introduced coherent states. In sections \ref{section: semiclassical vertex} and \ref{section: semiclassical face} we analyze the peakedness properties of the holomorphic vertex and face amplitudes respectively, in the semiclassical limit of small heat kernel `time', which corresponds to large areas with small relative dispersions. From section \ref{section: 4-simplex} through \ref{section: curvature constraint} we specialize the general results to a simplicial 2-complex: the semiclassical peaks of the holomorphic partial amplitude correspond to Regge-like geometries with a strong constraint on the deficit angles. Conclusions and outlooks are in the last section. The proofs are reported in the appendix.
%%%
%%%
%%%
\section{SU(2) holonomy amplitude}
A spinfoam quantum gravity amplitude \cite{Perez:2012wv} can be defined on a truncation of the theory determined by a 2-complex $\sigma$, a mathematical model for the space-time foam. The continuum limit is expected to be recovered by taking the infinite refinement limit of the 2-complex, similarly to what is done in lattice gauge theories, or by summing over 2-complexes \cite{Rovelli:2010qx}, e.g. using a group field theory vertex expansion \cite{Oriti:2009wn}.

In this paper we consider only the truncated amplitudes as we do not address the issue of the continuum limit.

Let us start with some definitions. The spinfoam 2-complex $\sigma$ is the 2-skeleton of a 4-dimensional dual complex $\mathcal C^*$. Thus $\sigma$ is the set of $n$-cells of $\mathcal C^*$, with $n=0,1,2$. The complex $\mathcal C$ and its dual $\mathcal C^*$ should be thought as a space-time cellular decomposition\footnote{More precisely, we can take $\mathcal C$ to be a CW-complex. Another possibility is to keep only the combinatorial structure and work with combinatorial complexes.}.

Thus the basic objects of the 2-complex are the 0-dimensional vertices $v$ (dual to the 4-cells $v^*$ of $\mathcal C$), the 1-dimensional edges $e$ (dual to the 3-cells $e^*$ of $\mathcal C$), and the 2-dimensional faces $f$ (dual to the 2-cells $f^*$ of $\mathcal C$). It is useful to partition each face in wedges: a wedge labeled with the couple $vf$ is a portion of the face $f$ that includes the vertex $v$. In $\mathcal C$ as well as in $\mathcal C^*$ there is a boundary map $\partial$ that defines the boundary $(n-1)$-cells of the $n$-cells.

The 2-complex alone carries no metric information. However the decorated 2-complex carries a labeling of the states of quantum geometry. The spinfoam amplitude for a 2-complex $\sigma$ is defined as the sum over all admissible decorations of $\sigma$, with specific weights for the various components of the 2-complex, and defines a tentative quantum gravity path integral. The decoration, also called coloring, can be done using various equivalent set of variables. The most common found in the literature is the spin-intertwiner coloring, but other equivalent variables revealed to be useful, such as holonomies \cite{Magliaro:2010ih,Bahr:2012qj} and fluxes \cite{Baratin:2010nn}. Here we consider $SU(2)$ holonomies and their conjugate $su(2)$ fluxes.

We provide also an orientation to the 2-complex, which is very important to write the amplitudes correctly, even though the spinfoam model considered here is orientation-independent. For our purposes, it is sufficient to assign an orientation only to the faces of the foam. The orientation of a face $f$ induces an orientation on each edge in $\partial f$. Thus we can define the source $s(ef)$ and target $t(ef)$ vertices of an edge $e$ with respect to a face bounded by $e$. Also, we can define the source $s(vf)$ and target $t(vf)$ edges of a vertex $v$ with respect to a face containing $v$, meaning that the source edge is the one on which the face induces an orientation in-going at the vertex. Let us also define a sign $\epsilon_{vef}$ which is $1$ if the edge $e$ is source with respect to $v$ and $f$, $-1$ otherwise.

A 4-cell $v^*$ of $\mathcal C$ has a boundary $\partial v^*$, which is a 3-complex. The 1-skeleton of the dual boundary $(\partial v^*)^*$ is a graph that we denote $\Gamma_v$, to which a loop quantum gravity kinematical Hilbert space $\mathcal H_{\Gamma_v}$ of quantum 3-geometry is associated:
\begin{align}\label{kinematics}
\mathcal H_{\Gamma_v}=L^2(SU(2)^{L}/SU(2)^N),
\end{align}
for a graph with $L$ links and $N$ nodes. We can think of it as the space of gauge-invariant square-integrable functions of $L$ $SU(2)$ variables, with inner product given by the Haar measure. Gauge-invariance at the nodes provides the solution of the quantum Gauss constraint\footnote{The kinematical Hilbert space solves also the spatial-diffeomorphism constraint, suitably formulated, if instead of working with embedded graphs we work with equivalence classes of embedded graphs, under the transformations induced by the complex automorphisms.}. The Hilbert space $\mathcal H_{\Gamma_v}$ provides the kinematics of the theory, locally at each vertex.

The truncated dynamics is provided by the spinfoam amplitude, a sort of partition function associating a complex number to the 2-complex $\sigma$. One of the building blocks of the spinfoam amplitude is the vertex amplitude. 

The vertex amplitude is a generalized function of the $SU(2)$ holonomies along the links of this graph. The links of $\Gamma_v$ can be labeled with the faces $f$ that are bounded by $v$, thus with a couple vertex-face $vf$. Notice that the links and the wedges share the labels $vf$. Now we are ready to define the Lorentzian EPRL vertex amplitude in the $SU(2)$ holonomy representation,
\begin{align}\label{holonomy vertex}
W(h_{vf})=\int_{SL(2,\mbb C)} \prod_{e\supset v}dG_{ve}\prod_{f\supset v}P(G_{t(vf)v}G_{vs(vf)},h_{vf}),
\end{align}
where the integrals implement the local Lorentz invariance at the vertex. Indeed we recall that $SL(2,\mbb C)$ is the double cover of the proper orthochronous Lorentz group $SO^+(1,3)$. Notice that we have one Lorentz transformation for each edge that is bounded by the vertex, or equivalently for each node in the vertex graph $\Gamma_v$. This expression is formal. However, for a large class of graphs that we call EPRL-integrable graphs, the vertex amplitude is well-defined\footnote{The holonomy vertex amplitude is well-defined if it is well-defined as a generalized function.} once we drop one redundant  $SL(2,\mbb C)$ integration  \cite{Engle:2008ev, Kaminski:2010qb}. The resulting regularized amplitude is independent of the group element on which we are not integrating. Thus we shall work implicitly only with 2-complexes $\sigma$ such that all vertex graphs are EPRL-integrable, and the redundant integrations are dropped. This regularization is understood throughout the paper.

The fundamental amplitude used to build the vertex amplitude \eqref{holonomy vertex} is the wedge amplitude. This is the integral kernel \cite{Bianchi:2010mw} of the map that implements the simplicity constraints, defined as
\begin{align}\label{EPRL kernel}
P(G,h):=\sum_j (2j+1)\text{Tr}[Y^\dagger D^{\gamma j,j}(G^{-1})YD^{j}(h)],
\end{align}
where $D^j(h)$ is the spin-$j$ $SU(2)$ representation operator, and $D^{\gamma j,j}$ the $(\gamma j,j)$ $SL(2,\mbb C)$ representation operator of the principal series. In $SL(2,\mbb C)$ there is a $SU(2)$ subgroup that leaves a reference time-like vector $\mathcal N:=(1,0,0,0)$ invariant, and the irreducible representations of $SL(2,\mbb C)$ decompose into an orthogonal sum of irreducible representation of this $SU(2)$ subgroup, labeled by a spin $k$,
\begin{align}\label{tower}
\mathcal H^{SL(2,\mbb C)}_{\gamma j,j}=\bigoplus_{k\geq j}\mathcal H^{SU(2)}_k.
\end{align}
The map $Y$ is the isometric injection of the spin-$j$ $SU(2)$ irreducible into the lowest $k=j$ $SU(2)\subset SL(2,\mbb C)$ irreducible in the tower \eqref{tower}. This completes the definition of the holonomy vertex amplitude.

The elementary wedge amplitude \eqref{EPRL kernel} plays a role in the dynamics. Indeed if we think a foam as a space-time cellular decomposition, or triangulation in some cases, space-time curvature is obtained as a sum of wedge extrinsic curvatures.

The full spinfoam amplitude is obtained once we prescribe a way to `glue' the vertex amplitudes together. The glue is the face amplitude. The standard choice of face amplitude is a $SU(2)$ Dirac delta function evaluated at the ordered product of the variables $h_{vf}$ looping around a face. This yields the full amplitude
\begin{align} \label{holonomy full}
Z=\int_{SU(2)} \prod_{vf}dh_{vf}\prod_{v}W(h_{vf})\prod_{f}\delta(\prod_{v\subset f} h_{vf}),
\end{align}
for a 2-complex without boundary. This formula is easily generalized to a 2-complex with boundary, where the amplitude is a function of the boundary holonomies, while the bulk holonomies are integrated over.

The product inside the delta function is ordered according to the orientation of the face. Notice that the arbitrary choice of the first element in the ordered product of the face amplitude is irrelevant, due to the properties of the Dirac delta function.

This is the expression of the spinfoam partition function derived in \cite{Bianchi:2010mw}, and studied in great detail in \cite{Bianchi:2010fj,Magliaro:2010ih}. We can see that the decoration of the 2-complex is simply given by $SU(2)$ holonomies on the wedges. For each wedge $vf$ we have two copies of $h_{vf}$, one in the vertex amplitude, and one in the face amplitude.

The `sum' over all possible holonomies has to be interpreted as a proposal for the path integral quantization of general relativity in the Ashtekar connection formulation, thus as a covariant formulation of loop quantum gravity. We recall that for some 2-complexes the amplitude \eqref{holonomy full} may still contain divergencies due to the $SU(2)$ integrals of the distributions \cite{Perini:2008pd,Krajewski:2010yq}.

Finally, we stress that the $SU(2)$ holonomy partition function is completely equivalent to the spin-intertwiner partition function by which the model was originally defined, with the choice of the spin-$j_f$ Hilbert space dimension, $2j_f+1$, for the face weights.
%%%%%
%%%%%
%%%
%%%
\section{Holonomy-flux observables and coherent states}\label{section: holonomy-flux}
It is not easy to extract the semiclassical behavior of the model from the holonomy amplitude \eqref{holonomy full}, for it is written in the sole configuration variables, and the conjugate momenta are left undetermined. The possibility to use coherent states optimally localized in phase space as a tool to test the dynamics of the theory has been often advocated in spinfoam quantum gravity. The novelty of the present approach, which is a development of a previous work \cite{Bianchi:2010mw}, is that we work directly with the classical Ashtekar variables of general relativity.

The canonical phase space of general relativity is infinite-dimensional and associated to fields on a 3-dimensional space-like surface, but the spinfoam truncation to a finite 2-complex $\sigma$ induces a truncation of the phase space down to a finite number of degrees of freedom. The degrees of freedom of the truncated phase space live on a graph, the 1-skeleton of a 3-complex. Consider a graph $\Gamma$ with $L$ oriented links and $N$ nodes. The classical phase space of loop gravity truncated to the graph is given by
\begin{align}\label{lqg phase space}
\mathcal T^*SU(2)^L//SU(2)^N,
\end{align}
the cotangent bundle of $L$ copies of $SU(2)$, modulo gauge transformations at the nodes of the graph. The double quotient $//$ is the symplectic reduction with respect to the gauge transformations that act at the nodes of the graph. Notice that this phase space is identical to the one of a lattice $SU(2)$ gauge theory, and we stress that this is a truncation of the classical theory which has nothing to do with the quantum theory. A nice study of this truncated classical phase space and its relation to the continuum theory can be found in the recent literature \cite{Freidel:2010aq, Freidel:2011ue, Haggard:2012pm}.

The interpretation of \eqref{lqg phase space} is the following: the $SU(2)$ transformations associated to the links are the holonomies $h_l$ of the Ashtekar connection $A$ along the links of the graph. Their conjugate $su(2)$ Lie algebra variables are the fluxes $X_l$ of the gravitational `electric' field $E$ across 2-surfaces $l^*$ which intersect once with the links, thus dual to the links. We recall that all the quantities $X_l$ should be thought as source fluxes, since they are transported to the source node $s(l)$ of the link. The target fluxes are defined by the parallel transport matrix of the Ashtekar connection, namely as $h_l\triangleright X_l$. 

The phase space structure is given by the well-known holonomy-flux Poisson algebra of general relativity,
\begin{align}
&\{h_l,h_{l'}\}=0,\,\, \quad \{X^i_l,X^j_{l'}\}=\delta_{ll'} \epsilon^{ij}_{\;\;\,k} X^k_l,\nonumber\\\label{poissonsmeared}
&\{X^i_{l},h_{l'}\}=\pm\delta_{ll'}8\pi G\gamma \,\tau^i h_l,
\end{align}
where the sign specifies the relative orientation of the link with respect to its dual surface. This is often called a smeared algebra, for it can be derived `integrating' the canonical brackets
\begin{align}
&\{A_a^i(x),A_b^j(y)\}=0,\,\, \quad \{E^a_i(x),E^b_j(y)\}=0, \nonumber\\
&\{E^a_i(x),A_b^j(y)\}=8\pi G\gamma \,\delta^i_j\delta_a^b\delta(x,y).
\end{align}
For an interpretation of the phase space \eqref{lqg phase space} in terms of a collection of polyhedra see \cite{Bianchi:2010gc}.

In the quantum theory, a coherent state is optimally localized in phase space and thus it is labeled by a set of holonomy-flux pairs
\begin{align}
(h_l,X_l)\in\mathcal T^*SU(2)\simeq SU(2)\times su(2),
\end{align}
one for each link $l\subset\Gamma$ of the graph. Notice that the previous labels specify a point in phase space only up to gauge transformations, so that they have a certain degree of redundancy.

A loop quantum gravity coherent state for the graph $\Gamma$, localized at a phase space point $H_{l}\in SU(2)^\mbb C$ ($l\subset \Gamma$) is defined as
\begin{align} \label{coherent state}
\Psi^t_{H_{l}}(h_{l}):=\int_{SU(2)} \prod_{n}{dg_n}\prod_{l} K_t(g_{t(l)}h_{l}g^{-1}_{s(l)},H_{l}),
\end{align}
where the labels $H_{l}\in SU^{\mbb C}$ belong to the complexification of $SU(2)$, which is $SL(2,\mbb C)$ viewed as a complex manifold, and $K_t$ is the analytic continuation in the second argument of the heat kernel $K_t(h,h')$ over $SU(2)$. To fix the ambiguities, we recall that our definition of the heat kernel in terms of $SU(2)$ irreducible characters (traces) is the following,
\begin{align}
K_t(h,h'):=\sum_j(2j+1)e^{-tj(j+1)}\chi^{(j)}(h^{-1}h').
\end{align}
In formula \eqref{coherent state}, there is one $SU(2)$ integral per each node $n$ of the graph (this is the group averaging on the gauge transformations) and one heat kernel per each link $l$ of the graph. The heat kernel parameter $t$, the `time' of a fictitious diffusion process, is a semiclassicality parameter. Thus the labels of coherent states are complexified Ashtekar holonomies.

In a few words, and in the language of the well-known quantum-mechanical Gaussian wave-packets, the $SU(2)$ heat kernel $K_t(h,h')$ is a natural group `Gaussian' in the variable $h$, peaked on $h=h'$. The complexification of $h'$ serves to add a `phase' that peaks the wave-function on the desired conjugate variable. Finally, the $SU(2)$ integrals in \eqref{coherent state} project on the gauge-invariant subspace of interest by group-averaging.

We shell use the notation $SU(2)^\mbb C$ for the space of coherent state labels, even if as a manifold it is isomorphic to $SL(2,\mbb C)$. This will eliminate any confusion between the phase space labels and the $SL(2,\mbb C)$ variables that implement the local Lorentz invariance of the model.

The geometric content of the complexified labels, together with their relation to the phase space of general relativity, is easily recovered using the polar decomposition of $SL(2,\mbb C)$ in rotations and boosts,
\begin{align} \label{polar}
H=he^{iX/t^\beta},
\end{align}
where $h$ is a $SU(2)$ element and
\begin{align} \label{XwithPauliMat}
X=|X|\hat X\cdot \vec\tau
\end{align}
belongs to $su(2)$, where $\vec\tau:=-i\vec\sigma/2$ are the standard $su(2)$ generators, thus the $i$ in the exponent turns \eqref{XwithPauliMat} into a boost generator in $sl(2,\mbb C)$. The quantity $\hat X$ denotes a normalized vector in $\mbb R^3$. Notice also that we have the $SU(2)$-invariant inner product $X\cdot Y:=\text{tr}(X^\dagger Y)$ defined on the Lie algebra $su(2)$, that we used implicitly to define the norm $|X|$ in the formula \eqref{XwithPauliMat}. 

Differently from \eqref{poissonsmeared}, we chose $X$ in the polar decomposition to be dimensionless and proportional to the gravitational flux. The precise relation between $X$ and the flux is discussed in section \ref{section: semiclassicality}.  From now on we will stay with this choice. Moreover, without loss of generality we have introduced in \eqref{polar} a power law scaling factor $1/t^\beta$ of the flux variable $X$, with $\beta$ a positive real number. This is useful in order to study a class of semiclassical states at once. At the end of section \ref{section: semiclassicality} we discuss the semiclassical properties of coherent states with such dependence on the heat kernel time.

Through \eqref{polar} we recover in a unique way the geometric holonomies and gravitational fluxes from the coherent state $\SUtwoC$ labels. Notice that we have slightly changed the notation since

More than this, we have the diffeomorphism
\begin{align}
\SUtwoC\simeq \mathcal T^*SU(2)\simeq SU(2)\times su(2),
\end{align}
which defines in fact a natural symplectomorphism. Indeed we recall \cite{Hall1} that the complex structure of $\SUtwoC$ and the phase space structure of $\mathcal T^*SU(2)$ fit together so as to form a K\"ahler manifold, so that there is a natural way to think $\SUtwoC$ as the phase space of $SU(2)$.

The important property of the coherent states that allows us to build a coherent state path integral for quantum gravity is the fact that they span the loop quantum gravity kinematical Hilbert space. Indeed in the Hilbert space $L^2(SU(2))$, the one associated to a graph which is a single loop, we have the following resolution of the identity
\begin{align} \label{resolution loop}
\int_{\SUtwoC} d\Omega(H)\Psi^t_{H}(h)\overline{\Psi^t_{H}(h')}=\delta(h,h'),
\end{align}
which is easily generalized to an arbitrary graph. For the explicit expression, and an elementary derivation of the $t$-dependent measure $d\Omega$, see \cite{Bianchi:2009ky}. Notice that for the single-loop Hilbert space the coherent states are simply $\Psi_H^t(h):=K_t(h,H)$. More generally, the resolution of the identity for the Hilbert space associated to a graph $\Gamma$ reads
\begin{align}
\int_{\SUtwoC} \big(\prod_{l}d\Omega(H_{l})\big)\Psi^t_{H_{l}}(h_{l})\overline{\Psi^t_{H_{l}}(h'_{l})}=\delta_{\text{g}}(h_{l},h'_{l}),
\end{align}
where the r.h.s. is a gauge-invariant delta function.
%%%
\section{The semiclassicality parameter $t$}\label{section: semiclassicality}
We review some well-know results \cite{Thiemann:2000ca,Bianchi:2009ky} on the peakedness properties of the loop quantum gravity coherent states, in particular the computation of the expectation values of some geometric operators on coherent states, and their dispersions in the semiclassical limit. This will confirm the geometric interpretation, as stated in section \ref{section: holonomy-flux}, of the $\SUtwoC$ labels we are using. Moreover, it will clarify the regime of the dynamics tested by these states.

We discuss for simplicity the simplest graph, which is a single loop. In this case the kinematical Hilbert space is simply $\mathcal H=L^2(SU(2))$, and we have a single coherent state label $H\simeq(h,X)$. Let us first define the two quantities\footnote{Notice that $j_0$ is by definition a continuous variable.}
\begin{align}\label{j0 and sigma0}
2j_0+1:=\frac{|X|}{t^{\beta+1}},\quad \sigma_0:=\frac{1}{2t},
\end{align}
with $\beta>0$. A complete basis of $SU(2)$ functions for the Hilbert space $\mathcal H$ is given by the spin-$j$ characters $\chi^{(j)}$. The area operator of a 2-surface punctured by the loop acts on basis vectors as
\begin{equation}
\hat{A} \,\chi^{(j)}(h)= 8\pi L^2_P\gamma \sqrt{j(j+1)}\,\chi^{(j)}(h).
\label{eq:A2 chi}
\end{equation}
In the limit of small $t$ the expectation value of the area operator on a coherent state is easily computed,
\begin{equation}
\langle A\rangle=\frac{(\Psi_{H},\,\hat{A}\,\Psi_{H})}{(\Psi_{H},\Psi_{H})}\approx8\pi L^2_P\gamma\sqrt{j_0(j_0+1)},
\end{equation}
which thanks to \eqref{j0 and sigma0} confirms the interpretation of $|X|$ as the quantity that prescribes the expectation value of the area, as well as the interpretation of $t$ as a semiclassicality parameter. Notice that with our parametrization \eqref{polar} of coherent states we have to remember that the dimensionless $X$ is not the gravitational flux, but it is related to it by the proportionality constant $8\pi G\gamma/t^{\beta+1}$.

Now we consider the other fundamental observable acting on the Hilbert space $\mathcal H$: the Wilson loop operator. This is the holonomy of the Ashtekar connection for the loop. Recall that it acts on basis vectors as
\begin{equation}
\hat{W} \,\chi^{(j)}(h)= \chi^{(\frac{1}{2})}(h)\,\chi^{(j)}(h)=\chi^{(j+\frac{1}{2})}(h)+\chi^{(j-\frac{1}{2})}(h).
\label{eq:W chi}
\end{equation}
As a result, for the expectation value on a coherent state we find
\begin{equation}
\langle W \rangle= 2\cos(\xi/2)\,e^{-\frac{t}{8}},
\end{equation}
where the angle $\xi$ identifies the conjugacy class of the $SU(2)$ group element where the Ashtekar loop holonomy is peaked on. Similarly, we can compute the dispersions of the area operator and of the Wilson loop. We find
\begin{equation}\label{DeltaA}
\Delta A=\sqrt{\langle A^2\rangle-\langle A\rangle^2}=4\pi L^2_P\gamma\,\sqrt{2\sigma_0},
\end{equation}
and
\begin{equation}\label{DeltaW}
\Delta W=\sqrt{\langle W^2\rangle-\langle W\rangle^2}=\sin (\xi/2)\,\frac{1}{\sqrt{2\sigma_0}}.
\end{equation}
As the area and the Wilson loop are non-commuting operators, we cannot make both their dispersions vanish at the same time. Small heat kernel time means that the state is sharply peaked on the holonomy, while large heat kernel time means that the state is sharply peaked on the area. A good requirement of semiclassicality is that the relative dispersions of both operators vanish in the limit $t\rightarrow 0$. This requirement is satisfied by the coherent states under consideration using the parametrization \eqref{polar}. Indeed using \eqref{j0 and sigma0} in \eqref{DeltaA} and \eqref{DeltaW} we find the following behavior in $t$ for the relative dispersions,
\begin{equation}\label{relative small}
\frac{\Delta A}{\langle A\rangle}\sim t^{\beta}\sqrt{t}\;\qquad \text{and}\qquad \frac{\Delta W}{\langle W\rangle}\sim \sqrt{t}.
\end{equation}
The single loop analysis can be easily generalized to an arbitrary graph. Thus we have that the loop quantum gravity coherent states for a graph $\Gamma$ are optimally peaked on the fundamental non-commuting loop quantum gravity observables. The small $t$ regime tested by these states is equivalent to looking at large areas with small relative dispersions.

Finally, we observe that the strictly positive case $\beta>0$ considered here is the same of reference \cite{Bianchi:2009ky}. However, there is a limiting case $\beta=0$ not discussed here which corresponds to relative dispersions \eqref{relative small} that vanish symmetrically and coherent state labels constant in $t$. In fact this is the case most studied by Thiemann and collaborators. The detailed treatment of this case is in progress and will be reported in a separate work. 
%%%
%%%
\section{Holomorphic path integral}\label{section: holomorphic path integral}
The Segal-Bargmann coherent transform with respect to the previously introduced coherent states allows us to rewrite the spinfoam amplitude as a state sum over coherent states. The transform is defined via the inner product between a coherent state and a general state. For functions of one $SU(2)$ variable, the transform is a map
$$\mathcal \rho:L^2(SU(2))\rightarrow \mathcal HL^2(\SUtwoC),$$
that takes a function $f$ of $SU(2)$ to the holomorphic function $\rho f$ of $\SUtwoC$ defined as
\begin{align}
\mathcal \rho f(H):=\int_{SU(2)}dh f(h)\Psi^t_H(h).
\end{align}
For functions of many $SU(2)$ variables, the transform is simply the multiple transform with respect each of them. For the particular case of gauge-invariant functions, the multiple transform is equivalently defined via the inner product with a loop quantum gravity gauge-invariant coherent state \eqref{coherent state}. In the canonical context, the coherent transform was first studied in \cite{Ashtekar:1994nx}.

We associate a phase space to the variables on the 3-dimensional surfaces of the complex $\mathcal C$. Let us see how the complex $\mathcal C$ determines a truncation in the loop quantum gravity phase space.  One possibility is the following. As discussed previously, a vertex $v$ is dual to the 4-cell $v^*$, which has a 3-dimensional boundary. We associate to this boundary a phase space defined as the loop quantum gravity phase space for the graph $\Gamma_v$. Thus we associate a holonomy-flux couple $(h_{vf},X_{vf})$ to each link $vf$ in $\Gamma_v$. The holonomy is along the link $vf$, and the flux $X_{vf}$ is across the 2-cell of $\partial v^*$ which is dual to the link $vf$. The holonomy-flux algebra of observables associated to the graph $\Gamma_v$ defines the truncated phase space associated to each vertex $v$ of the 2-complex.

Now we use the Segal-Bargmann transform. In particular, the generalization to compact groups obtained by Hall, which was adopted in the loop quantum gravity framework long time ago \cite{Ashtekar:1994nx}.

Following \cite{Bianchi:2010mw}, our first step is the construction of a holomorphic vertex amplitude. This is defined as the Segal-Bargmann transform of the holonomy vertex amplitude, namely
\begin{align} \label{holomorphic vertex}
W_t(H_{vf}):=\int_{SU(2)}\prod_{f\supset v}dh_{vf}W(h_{vf})\Psi^t_{H_{vf}}(h_{vf}).  
\end{align}
Notice that this is a slight generalization of the original transform to generalized functions. Indeed remember that the holonomy vertex amplitude is a distribution. However its Segal-Bargmann transform is an ordinary function, due to the heat kernel dumping factor. Introducing the holomorphic wedge amplitude, or holomorphic simplicity constraint kernel,
\begin{align}\label{holomorphic wedge}
P_t(G,H):=\int_{SU(2)}dhP(G,h)\Psi^t_H(h),
\end{align}
namely the coherent transform of the simplicity constraint kernel \eqref{EPRL kernel}, we can rewrite the holomorphic vertex amplitude in the following way,
\begin{align}\label{holomorphic vertex explicit}
W_t(H_{vf})=\int_{SL(2,\mbb C)} \prod_{e\supset v}dG_{ve}\prod_{f\supset v}P_t(G_{t(v,f),v}G_{v,s(v,f)},H_{vf}).
\end{align}
Remember also that one of the edge $SL(2,\mbb C)$ integrals is dropped to regularize the amplitude.

The glueing of vertex amplitudes at each face is done integrating the vertices against a anti-holomorphic face amplitude with a suitable measure. For the standard choice \eqref{holonomy full}, the glueing function must be a complexified heat kernel, and the measure is the one resolving the identity \eqref{resolution loop}. Thus the spinfoam amplitude of a 2-complex $\sigma$ in the holomorphic representation reads
\begin{align} \label{holomorphic full}
Z=\int_{\SUtwoC}\prod_{vf} d\Omega(H_{vf}) \prod_v W_t(H_{vf})\prod_f \overline{K_{V(f)t}(\prod_{v\subset f}H_{vf})},
\end{align}
where the heat kernel parameter of the face amplitude\footnote{The definition of the single-variable heat kernel $K_t(h)$ should be clear. In terms of the two-variable heat kernel, it is $K_t(h):=K_t(\mbb 1,h)$.} must have the multiplicity of the number of vertices $V(f)$ in the face $f$. Notice that the face amplitude is anti-holomorphic, which explains the complex conjugation in the last expression. Thus the local amplitudes are holomorphic or anti-holomorphic, whereas the full amplitude is not. Nevertheless we shall call the expression \eqref{holomorphic full} the holomorphic amplitude, for simplicity.

This last expression for the spinfoam amplitude can be shown to be completely equivalent to the original holonomy representation \eqref{holonomy full}. The proof is straightforward, using the aforementioned identity resolution with coherent states.

This new representation is suitable for the analysis of the constraints in phase space, that is the dynamics of the theory. We should expect that the phase space constraints found from the spinfoam formalism are strictly related to the classical constraints of the canonical theory, in particular to the Hamiltonian constraint.

One possibility explored in this paper is the study of the holomorphic partial amplitude, defined as the integrand in \eqref{holomorphic full}, namely the amplitude at \emph{fixed} values of the variables $H_{vf}$, in the small $t$ regime. The parameter $t$ determines the semiclassicality of coherent states.

Thus we are interested in the peakedness properties of the partial amplitude
\begin{align}\label{holomorphic partial}
Z_t(H_{vf})=\prod_v W_t(H_{vf})\prod_f K_{V(f)t}(\prod_{v\subset f}H_{vf}),
\end{align}
in the semiclassical regime $t\rightarrow 0$, and parametrization \eqref{polar}. Physically, this regime corresponds to looking at certain coherent Feynman histories (semiclassical quantum space-times with large individual areas) and determining which ones are suppressed by quantum interference and which are not. By the usual interference mechanism of quantum mechanics, we expect the classical theory to emerge in the semiclassical regime.

A minor technical point that we have to remember in the subsequent semiclassical analysis is that the coherent transform by which we obtained the last formula \eqref{holomorphic partial} was performed with respect to un-normalized states, and the norm depends on $t$. We recall that the norm of the single-loop coherent state $\Psi^t_H$ is exponentially growing as $\sim \exp(|X|^2/4t^{\beta+2})$, and similarly for the general graph. Even though we could have used the normalized states, we prefer the un-normalized ones in order to preserve the holomorphicity and anti-holomorphicity of the local amplitudes, and also to keep formulas as simple as possible. Moreover, we will always discard the un-interesting case where some of the fluxes are vanishing. Taking the behavior of the norm into account, the relevant definition of peakedness for the holomorphic amplitude is the following.
\begin{definition}\label{definition: suppression}
We say that the holomorphic partial amplitude $Z_t(H_{vf})$, where
\begin{align}
H_{vf}=h_{vf}e^{iX_{vf}/t^\beta},\quad X_{vf}\neq 0,\quad \beta>0,
\end{align}
for all the wedges of the 2-complex, is suppressed for $t\rightarrow 0$ if the following behavior holds,
\begin{align}
\big(\prod_{vf}\exp{\frac{-|X_{vf}|^2}{4t^{\beta+2}}}\big) Z_t(H_{vf})=\mathcal O(t^\alpha),\quad \forall \alpha>0.
\end{align}
\end{definition}
In a similar way, we will speak about non-suppressed vertex and face amplitudes separately, meaning that we have multiplied the amplitudes by the appropriate exponential factor.
%%%
%%%

%%%
\section{Semiclassical vertex geometry} \label{section: semiclassical vertex}
A bivector is an element in $\bigwedge^2\Mink$. A simple bivector\footnote{Any non-simple bivector admits a unique orthogonal decomposition into two simple bivectors.} $B$ has the form $B=A\wedge C$, where $A$ and $C$ are 4-vectors. Geometrically, a simple bivector can be thought as an oriented 2-plane segment spanned by $A$ and $C$. It has information about the area of the plane segment, the 2-plane where the segment lies, and the orientation of the 2-plane. The Minkowski metric with signature $-+++$ is used to lower the indices and compute the scalar product of bivectors,
\begin{align}
A\cdot B:=A^{IJ}B_{IJ},
\end{align}
thus the sign of $A\cdot A$ characterizes $A$ as a space-like ($>0$), or time-like ($<0$) bivector. The standard Hodge map $*$ acts on bivectors as 
\begin{align}
(*B)^{IJ}:=\frac{1}{2}\epsilon^{IJ}_{\;\;\;\, KL}B^{KL}. 
\end{align}
We recall also that there is a vector space isomorphism between bivectors and the Lorentz algebra, in such a way that any $SO^{+}(1,3)$ Lorentz transformation can be generated by exponentiation of a bivector with the second index lowered. In other words, an element in the Lorentz algebra can be written as
\begin{align}
L=B\eta,
\end{align}
where $B$ is a bivector, and $\eta$ is the flat metric tensor. An association between the spinfoam flux label $X$ of the coherent states and a bivector can be done as in the following
%%%
\begin{definition}[Bivectors in time-gauge] \label{bivectors time-gauge}
A flux variable $X\in su(2)$ defines the space-like simple bivector 
\begin{align}
B(X):=\frac{1}{2}|X|*(1,\hat X)\wedge (1,-\hat X).
\end{align}
The bivector is orthogonal to the reference time-like direction $\mathcal N:=(1,0,0,0)$, and we say it is in the time-gauge.
\end{definition}
The bivector $B(X)$ just defined is a plane segment `at rest' in Minkowski space, with area
\begin{align}
|B(X)|:=\sqrt{B(X)\cdot B(X)}=|X|.
\end{align}
Now we are ready to state the main result on the holomorphic vertex amplitude. We have the following
\begin{proposition}[Asymptotic vertex amplitude] \label{proposition: vertex}
$W_t(H_{vf})$ is non-suppressed for small $t$ if and only if the following relations hold for the holonomy-flux labels $H_{vf}$. There exist $SL(2,\mbb C)$ elements $G_{ve}$ and real parameters $\xi_{vf}$ such that
\begin{align}\label{wedge relation}
U(h_{vf}^{-1}G_{t(ef),v}G_{v,s(vf)})=e^{-\xi_{vf}(\gamma+*)B(X_{vf})},
\end{align}
and the fluxes close to zero at each edge,
\begin{align}\label{closure}
\sum_{f_\text{in}\supset e}X_{vf}-\sum_{f_\text{out}\supset e}h_{vf}\triangleright X_{vf}=0,
\end{align}
where $f_\text{in}$ are the faces that induce on $e$ an ingoing orientation at the vertex, and $f_\text{out}$ the outgoing ones.
\end{proposition}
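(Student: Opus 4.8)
The plan is to express the holomorphic vertex amplitude $W_t(H_{vf})$ as an integral over $SL(2,\mathbb{C})^{|e\supset v|}$ of a product of holomorphic wedge kernels $P_t$, then evaluate each $P_t(G,H)$ in the small-$t$ regime and perform a stationary-phase analysis in the $SL(2,\mathbb{C})$ variables. First I would compute $P_t(G,H)$ explicitly by inserting the definition of the coherent state $\Psi^t_H$ and of the simplicity kernel $P(G,h)$, doing the $SU(2)$ integral and resumming the characters against the heat-kernel factor $e^{-tj(j+1)}$; because $|X_{vf}|/t^{\beta+1}=2j_0+1$ is large, the sum over $j$ is dominated by a narrow window around $j_0$ with Gaussian width controlled by $\sigma_0=1/(2t)$. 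In this regime $P_t(G,H)$ takes the form of an exponential whose exponent is (up to the norm-compensating factor $\exp(-|X|^2/4t^{\beta+2})$ of Definition \ref{definition: suppression}) a function of $G$ and of the boosted/rotated coherent-state data, maximized precisely when the $SL(2,\mathbb{C})$ element $h_{vf}^{-1}G_{t(ef),v}G_{v,s(vf)}$ acts on the time-gauge bivector $B(X_{vf})$ as a pure boost-plus-rotation generated by $(\gamma+*)B(X_{vf})$ — this is where the characteristic EPRL combination $\gamma+*$ enters, through the $(\gamma j,j)$ labels of the principal series and the $Y$-map into the lowest $SU(2)$ level $k=j$ of the tower \eqref{tower}.

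Next I would assemble the vertex amplitude: multiplying the per-wedge exponents and integrating over the $G_{ve}$. The total exponent is a sum of the wedge contributions, each bounded above (after the norm compensation) by zero, with equality iff the wedge relation \eqref{wedge relation} holds; hence non-suppression in the sense of Definition \ref{definition: suppression} forces, for \emph{some} choice of $G_{ve}$ and real $\xi_{vf}$, that \eqref{wedge relation} hold at every wedge. Conversely, if \eqref{wedge relation} can be satisfied, one checks that the exponent vanishes at that configuration and that the residual Gaussian/Hessian integral over fluctuations in $j$ and in the $G_{ve}$ contributes only a polynomial prefactor in $t$, so $W_t$ is indeed non-suppressed. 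To get the closure condition \eqref{closure} I would then extract the stationarity of the $G_{ve}$ integration itself: varying the total exponent with respect to $G_{ve}$ at a fixed edge $e$ — equivalently demanding invariance of the critical configuration under the leftover gauge acting at the node dual to $e$ — produces exactly the vanishing of the signed sum of fluxes $\sum_{f_\text{in}}X_{vf}-\sum_{f_\text{out}}h_{vf}\triangleright X_{vf}$, the $h_{vf}\triangleright$ appearing because outgoing faces carry their flux parallel-transported to the node by the Ashtekar holonomy (the target-flux prescription recalled in Section \ref{section: holonomy-flux}). This is the standard gauge/closure interplay of the EPRL critical-point analysis.

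For the quantitative input — the precise asymptotics of $D^{\gamma j,j}$ matrix elements contracted with $Y$ and with the peaked $SU(2)$ data — I would lean on the Nottingham asymptotic techniques of \cite{Barrett:2009mw} and on the Hall-coherent-state estimates of \cite{Bianchi:2009ky,Bianchi:2010mw}, adapting them to the holomorphic setting where the $SU(2)$ coherent-state label is complexified to $H=he^{iX/t^\beta}$ rather than being a Livine–Speziale vector. The main obstacle, I expect, will be controlling the interchange of the large-$|X|$ (equivalently small-$t$) limit with the $SL(2,\mathbb{C})$ integral: $SL(2,\mathbb{C})$ is noncompact, so one must show the integrand decays fast enough at large boosts that only a neighborhood of the critical set contributes, and that the critical set is nondegenerate enough (modulo the residual gauge orbit) for a clean Hessian estimate. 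A secondary subtlety is the bookkeeping of the dropped redundant $SL(2,\mathbb{C})$ integration and the $\epsilon_{vef}$ orientation signs, which must be tracked carefully so that the boost parameter $\xi_{vf}$ comes out real and the bivector $B(X_{vf})$ carries the correct orientation in \eqref{wedge relation}. I would organize the write-up so that the wedge-level asymptotics is a self-contained lemma, and the vertex statement then follows by multiplicativity plus the gauge-variation argument for \eqref{closure}.
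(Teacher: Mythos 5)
Your overall strategy --- reduce $W_t$ to an integral of wedge kernels, use the small-$t$ peaking of $D^j(H)$ on the top magnetic number to localize the spin sum near $j_0=|X|/2t^{\beta+1}$, and then run a critical-point analysis whose $G_{ve}$-variation yields closure --- is the same as the paper's, and your identification of the $G_{ve}$-variation as the source of \eqref{closure} is exactly right. However, there is a genuine gap in how you claim to obtain the wedge relation \eqref{wedge relation}. You assert that the (norm-compensated) exponent of each wedge contribution is $\le 0$ ``with equality iff the wedge relation holds.'' That equivalence is false in one direction: vanishing of the real part of the wedge action only forces $j_{vf}=j_0$ and the alignment of the spinors $G^\dagger_{v,s(vf)}z_{vf}\propto x_{vf}$, $G^\dagger_{v,t(vf)}z_{vf}\propto h_{vf}x_{vf}$, which (together with the $z$-stationarity) produces a transport equation of the form $U(h_{vf}^{-1}G_{t(vf),v}G_{v,s(vf)})=e^{\xi^*_{vf}B(X_{vf})+\xi_{vf}*B(X_{vf})}$ with \emph{two independent} parameters: a boost rapidity $\xi_{vf}$ and a rotation angle $\xi^*_{vf}$. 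This is only the bivector-geometry statement of Corollary \ref{corollary: bivector}, which the paper explicitly stresses is strictly weaker than Proposition \ref{proposition: vertex}. The locking $\xi^*_{vf}=-\gamma\xi_{vf}\bmod 4\pi$, i.e.\ the $(\gamma+*)$ structure, does not come from maximizing the modulus of the wedge kernel; it comes from the stationary-phase condition in the spin variable, $\delta_{j_{vf}}S_v=0$, which is legitimate because the Gaussian factor $e^{-t(j-j_0)^2}$ makes $j$ quasi-continuous (Euler--Maclaurin). Equivalently: a residual oscillating phase $e^{2ij(\gamma\xi+\xi^*)}$ under the Gaussian envelope in $j$ produces destructive interference of order $e^{-(\gamma\xi+\xi^*)^2/4t}$ unless the phase is stationary mod the spin spacing. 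Your write-up never varies with respect to $j$ (you invoke the $j$-fluctuations only for the Hessian prefactor in the converse direction), so as it stands you would prove the corollary, not the proposition.

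A secondary omission: you do not say how the principal-series matrix element $\langle j,\hat X|Y^\dagger D^{\gamma j,j}(G^{-1})Y h|j,\hat X\rangle$ is to be controlled. The paper makes the action explicit by realizing $\mathcal H^{(j,\gamma j)}$ as homogeneous functions on $\mathbb C^2$ and introducing an auxiliary $\mathbb{CP}^1$ integration variable $z_{vf}$ per wedge; the reality conditions and the $z$-variation are then both needed to assemble the transport equation. Some such explicit realization (or the equivalent Nottingham machinery) is not optional, and without it the claim that the wedge exponent ``is maximized precisely when'' \eqref{wedge relation} holds cannot be verified. Your flagged concerns about noncompactness of $SL(2,\mathbb C)$, the dropped redundant integration, and the orientation signs are all appropriate, but the missing spin-stationarity step is the one that changes the content of the theorem.
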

In the equation \eqref{wedge relation} we used the identification of bivectors with elements in the Lorentz algebra via the flat metric tensor, and the Lie algebra exponential map. The map $U$ is the projection of $SL(2,\mbb C)$ on the $(1/2,1/2)$ finite-dimensional 4-vector representation, namely it is just the covering map of the Lorentz group. In equation \eqref{closure} $h$ acts in the spin-1 3-vector representation. Remember also that in this paper we work only with EPRL-integrable graphs, and this choice is understood in the previous proposition.
We shall see that this proposition implies the existence of a bivector geometry at the vertex $v$, defined as follows.
\begin{definition}[Bivector geometry] An assignment of bivectors $B_{f}(v)$ ($f\supset v$) at a vertex $v$ is called a bivector geometry if the following relations hold.
\begin{itemize}
\item Closure: for every edge $e\supset v$,
$$\sum_{f\supset e}\epsilon_{vef}B_f(v)=0.$$
\item Simplicity: for every face $f\supset v$,
$$B_f(v)\wedge B_f(v)=0.$$
\item Cross-simplicity: for every two faces $f,f'\supset v$ bounded by the same edge $e\subset f,f'$,
$$B_f(v)\wedge B_{f'}(v)=0.$$
\end{itemize}
\end{definition}
We recall that $B_f(v)$ being simple means, equivalently, that it can be written as the exterior product of two 4-vectors. Cross-simplicity states that for any two faces that share the same edge, also the sum $B_f(v)+B_{f'}(v)$ is simple.

Now we want to check that the proposition \ref{proposition: vertex} implies a bivector geometry. Notice that by \eqref{wedge relation}, two bivectors associated to the same face must agree upon transport at the vertex (or the amplitude is suppressed). In other words, we are able to define the following bivectors `in the frame of the vertex',
\begin{align} \label{transport constraint}
B_{f}(v):=G_{v,s(vf)}\triangleright B(X_{vf})=G_{v,t(vf)}h_{vf}\triangleright B(X_{vf}),
\end{align}
obtained by transport of the boundary data in a common frame at the vertex $v$. Here $SL(2,\mbb C)$ acts in the 4-vector representation on the bivectors, or in the adjoint representation if we think bivectors as Lorentz algebra elements. The right equality in \eqref{transport constraint} holds for non-suppressed vertex amplitudes by proposition \ref{proposition: vertex}.

By the construction \eqref{bivectors time-gauge}, the vertex bivectors \eqref{transport constraint} are simple, and by \eqref{closure} close to zero at each edge because the fluxes $X_{vf}$ do. Moreover, the bivectors are cross-simple at each edge $e$, since those ones in the time-gauge are cross-simple by construction. In fact, they satisfy a constraint which is stronger than cross-simplicity: the vertex bivectors at the edge $e$ all lie in the space-like 3-plane orthogonal to $G_{ve}\triangleright\mathcal N$.

Thus we have shown the following corollary of proposition \ref{proposition: vertex}.
\begin{corollary}\label{corollary: bivector}
The holomorphic vertex amplitude is non-suppressed only if there are $SL(2,\mbb C)$ elements $G_{ve}$ such that $H_{vf}$ determines a bivector geometry $B_f(v)$ via \eqref{transport constraint}.
\end{corollary}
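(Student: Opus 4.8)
The plan is to derive Corollary~\ref{corollary: bivector} directly from Proposition~\ref{proposition: vertex} by unpacking the three defining properties of a bivector geometry one at a time, using only the two relations \eqref{wedge relation} and \eqref{closure} together with Definition~\ref{bivectors time-gauge}. First I would fix, for a non-suppressed $W_t(H_{vf})$, the $SL(2,\mbb C)$ elements $G_{ve}$ and the real parameters $\xi_{vf}$ whose existence is guaranteed by the Proposition, and \emph{define} the candidate vertex bivectors by \eqref{transport constraint}, i.e. $B_f(v):=G_{v,s(vf)}\triangleright B(X_{vf})$. The first thing to check is that this definition is consistent, i.e. that $G_{v,s(vf)}\triangleright B(X_{vf})=G_{v,t(vf)}h_{vf}\triangleright B(X_{vf})$: this is precisely where \eqref{wedge relation} enters, since $U(h_{vf}^{-1}G_{t(vf),v}G_{v,s(vf)})=e^{-\xi_{vf}(\gamma+*)B(X_{vf})}$ is a group element that stabilizes $B(X_{vf})$ (the generator $(\gamma+*)B(X_{vf})$ commutes with $B(X_{vf})$, so the exponential fixes the plane it spans), hence $G_{t(vf),v}G_{v,s(vf)}$ and $h_{vf}$ act the same way on $B(X_{vf})$ up to that stabilizer, giving the right equality in \eqref{transport constraint} after acting with $G_{v,t(vf)}$.

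Next I would verify the three bullets. \emph{Simplicity}: $B(X_{vf})$ is simple by Definition~\ref{bivectors time-gauge} (it is manifestly of the form $A\wedge C$ up to a Hodge star, and $*$ of a simple bivector orthogonal to a vector is again simple), and simplicity is preserved by the $SL(2,\mbb C)$ action in the 4-vector representation since that action sends $A\wedge C$ to $(GA)\wedge(GC)$; hence each $B_f(v)$ is simple. \emph{Cross-simplicity}: for two faces $f,f'$ sharing an edge $e$, both $B(X_{vf})$ and $B(X_{vf'})$ are orthogonal to $\mathcal N=(1,0,0,0)$, so they lie in the same space-like 3-plane, and any two bivectors lying in a common 3-plane (equivalently, each built from vectors inside a fixed 3-plane) have vanishing wedge; transporting both by the \emph{same} $G_{ve}$ — which is legitimate because for a face $f\supset e$ the relevant source/target edge at $v$ is $e$ itself, so the transport matrix appearing in \eqref{transport constraint} is $G_{ve}$ for all faces at that edge — preserves this, so $B_f(v)\wedge B_{f'}(v)=0$. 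This also shows the stronger statement quoted in the text, that all $B_f(v)$ with $f\supset e$ lie in the 3-plane orthogonal to $G_{ve}\triangleright\mathcal N$. \emph{Closure}: transporting \eqref{closure} by $G_{ve}$ and recalling that $\epsilon_{vef}$ records whether $e$ is the source or the target edge of $f$ at $v$ — so that $B_f(v)$ equals $G_{ve}\triangleright B(X_{vf})$ with a sign $\epsilon_{vef}$ relating it to either $G_{ve}\triangleright B(X_{vf})$ or $G_{ve}h_{vf}\triangleright B(X_{vf})$ (these two being equal by the previous paragraph) — turns $\sum_{f_\text{in}}X_{vf}-\sum_{f_\text{out}}h_{vf}\triangleright X_{vf}=0$ into $\sum_{f\supset e}\epsilon_{vef}B_f(v)=0$, since $B(\,\cdot\,)$ and the $G_{ve}$-action are linear and $B$ intertwines the $SU(2)$ adjoint action on $su(2)$ with the $SO^+(1,3)$ action on bivectors in time-gauge.

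The main obstacle I expect is the bookkeeping around orientations and transport matrices — getting the identification between the ingoing/outgoing decomposition of faces at an edge in \eqref{closure} and the signs $\epsilon_{vef}$ in the closure axiom exactly right, and checking that the single group element $G_{ve}$ really is the common transport for every face at $e$ (this rests on the fact that in the vertex graph $\Gamma_v$ the link $vf$ runs between the nodes dual to $s(vf)$ and $t(vf)$, one of which is $e$). The genuinely substantive input, the compatibility of the two transports along a face, is essentially immediate once one observes that the right-hand side of \eqref{wedge relation} fixes $B(X_{vf})$; everything else is linear algebra plus tracking conventions. I would therefore present the argument as: (i) well-definedness of $B_f(v)$ from \eqref{wedge relation}; (ii) simplicity and cross-simplicity from Definition~\ref{bivectors time-gauge} and equivariance of the 4-vector action; (iii) closure from \eqref{closure} and linearity — concluding that $H_{vf}$ determines a bivector geometry, which is the statement of the corollary.
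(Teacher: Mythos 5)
Your argument is correct and follows essentially the same route as the paper: define $B_f(v)$ by transport, obtain well-definedness from \eqref{wedge relation} (via the observation that the generator $(\gamma+*)B(X_{vf})$ commutes with $B(X_{vf})$, so the exponential stabilizes it), get simplicity and cross-simplicity from the time-gauge construction together with equivariance under the common transport $G_{ve}$, and get closure from \eqref{closure} by linearity of $X\mapsto B(X)$. The paper's own proof is the same chain of reasoning stated more tersely — it asserts the transport compatibility without the commutator justification you supply — so your write-up is a correct, slightly more detailed version of the same argument.
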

Notice that this corollary is almost equivalent to the proposition \ref{proposition: vertex}, but it is weaker. Proposition \ref{proposition: vertex} contains an information that is missing in the bivector geometry equations. The extra information is precisely the proportionality of the two coefficients multiplying $B(X_{vf})$ and $* B(X_{vf})$ in the r.h.s. of \eqref{wedge relation}. This extra requirement was already discussed in some detail in reference \cite{Bianchi:2010mw}, in a different language. We shall discuss in detail its physical meaning in section \ref{section: Ashtekar twists}.

%%%%%%%%%%%
%%%%%%%
\section{The face amplitude: connecting vertices}\label{section: semiclassical face}
In this section we study the constraints imposed by the face amplitude defined in the formula \eqref{holomorphic full} of the holomorphic partition function.

The analysis is independent: we do not impose at this stage the constraints found in the analysis of the vertex amplitude. As expected, it turns out that the anti-holomorphic face amplitude is responsible of the gluing of vertices. In particular, it implies the area-matching constraint.

The result is the following
%%%
\begin{proposition}\label{proposition: face amplitude}
The anti-holomorphic face amplitude 
\begin{align}
\overline{K_{V(f)t}(\prod_{v\subset f}H_{vf})}
\end{align}
is non-suppressed for $t\rightarrow 0$ if and only if the following relations hold. For each edge $e\subset f$ the glueing equation is satisfied,
\begin{align} \label{glueing constraint}
X_{t(e)f}=h_{s(e)f}\triangleright X_{s(e)f},
\end{align}
and the loop Ashtekar holonomy is trivial,
\begin{align} \label{flatness constraint}
\prod_{v\subset f}h_{vf}=1.
\end{align}
\end{proposition}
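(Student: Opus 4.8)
The plan is to analyze the asymptotics of the complexified heat kernel $\overline{K_{V(f)t}(\prod_{v\subset f}H_{vf})}$ directly, exploiting the fact that the heat kernel, unlike the vertex amplitude, carries no $SL(2,\mathbb C)$ integrations, so the analysis reduces to understanding a single group element $\mathcal{G}_f := \prod_{v\subset f} H_{vf}$ in $SU(2)^{\mathbb C}$. First I would use the character expansion $K_{V(f)t}(\mathcal G)=\sum_j (2j+1)e^{-tj(j+1)V(f)}\chi^{(j)}(\mathcal G)$, analytically continued in $\mathcal G$. The key observation is that each $H_{vf}=h_{vf}e^{iX_{vf}/t^\beta}$, so the product $\mathcal G_f$ has a polar decomposition $\mathcal G_f = u_f e^{iZ_f/t^\beta}$ for some $u_f\in SU(2)$ and $Z_f\in su(2)$ depending on the individual $(h_{vf},X_{vf})$; the trace $\chi^{(j)}(\mathcal G_f)$ then grows like $e^{|Z_f| (2j+1)/t^\beta}$ for large $j$. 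Balancing this growth against the Gaussian damping $e^{-tj(j+1)V(f)}$, a saddle-point (or Laplace) estimate puts the dominant contribution at $2j+1 \sim |Z_f|/(t^{\beta+1}V(f))$, and the resulting exponential rate is $\sim |Z_f|^2/(4t^{\beta+2}V(f))$.

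The heart of the proof is then to compare this rate with the reference exponent $\sum_{vf}|X_{vf}|^2/(4t^{\beta+2})$ appearing in Definition \ref{definition: suppression}. Non-suppression requires that the damping exponent $|Z_f|^2/(4t^{\beta+2}V(f))$ saturate the sum $\sum_{v\subset f}|X_{vf}|^2/(4t^{\beta+2})$. By a convexity/triangle-type inequality for the norm of a product of boost generators — concretely, $|Z_f|\le \sum_{v\subset f}|X_{vf}|$ with equality iff all the boost directions align \emph{after} accounting for the intervening $SU(2)$ rotations $h_{vf}$ — together with Cauchy–Schwarz relating $(\sum|X_{vf}|)^2$ to $V(f)\sum|X_{vf}|^2$, I expect the two exponents to match exactly precisely when (i) all the $|X_{vf}|$ are equal around the face and the boosts align, which forces the transported-flux alignment $X_{t(e)f}=h_{s(e)f}\triangleright X_{s(e)f}$ of \eqref{glueing constraint}, and (ii) the leftover $SU(2)$ part $u_f$ of $\mathcal G_f$ is trivial — otherwise $\chi^{(j)}(u_f e^{iZ_f/t^\beta})$ acquires an oscillatory factor $\sim\cos((2j+1)\theta_f/2)$ which, summed against the peaked Gaussian, produces extra suppression unless $\theta_f=0$, i.e. $\prod_{v\subset f}h_{vf}=1$, which is \eqref{flatness constraint}. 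The converse direction is the easy one: assuming \eqref{glueing constraint} and \eqref{flatness constraint}, the product $\mathcal G_f$ reduces to a pure boost $e^{iZ_f/t^\beta}$ with $|Z_f|=V(f)\,|X_{f}|$ (common value), and a direct Laplace estimate of $\sum_j(2j+1)e^{-tj(j+1)V(f)}\chi^{(j)}(e^{iZ_f/t^\beta})$ shows the rate is exactly $\sum_{vf}|X_{vf}|^2/(4t^{\beta+2})$, so the amplitude is non-suppressed.

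The main obstacle I anticipate is making the ``product of boosts with intervening rotations'' estimate rigorous: one must show that $|Z_f|$, the boost parameter of $\prod_{v} h_v e^{iX_v/t^\beta}$, is bounded by $\sum_v |X_v|$ with a sharp characterization of the equality case, and simultaneously control the residual $SU(2)$ angle $\theta_f$ and its interplay with the sum over $j$. This is a statement about the Cartan (polar) decomposition of a product in $SL(2,\mathbb C)$; I would handle it by passing to the $2\times 2$ matrix representation, writing each factor explicitly, and tracking the largest singular value of the product, noting that equality in submultiplicativity of the operator norm forces the successive factors to ``stack'' coherently — which is exactly the alignment of transported fluxes. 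A secondary technical point is justifying the interchange of the (infinite) sum over $j$ with the small-$t$ asymptotics and controlling the oscillatory cancellation when $\theta_f\neq 0$; this can be done with a standard stationary-phase / Poisson-resummation argument on the $j$-sum, or by recognizing the $SU(2)$ heat kernel's known small-$t$ Gaussian form and its analytic continuation, as already used in the single-loop coherent state computations of \cite{Bianchi:2009ky}. Full details are deferred to the appendix.
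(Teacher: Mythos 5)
Your plan is correct and rests on the same underlying mechanism as the paper's proof --- a Laplace/saddle estimate of the $j$-sum balancing the exponential growth of the analytically continued character against the Gaussian damping $e^{-V(f)tj(j+1)}$, with the peak at $2j\sim |X|/t^{\beta+1}$, the norm matching giving the gluing/area-matching constraints, and stationarity in $j$ of the residual phase giving flatness --- but the technical packaging is genuinely different. The paper never forms the polar decomposition of the full product $\mathcal G_f=\prod_v H_{vf}$: it inserts, factor by factor, the rank-one approximation $D^j(H_{vf})\approx e^{j|X_{vf}|/t^\beta}\,h_{vf}\triangleright\ket{j,\hat X_{vf}}\bra{j,\hat X_{vf}}$, so that the character collapses to $e^{j\sum_v|X_{vf}|/t^\beta}\prod_e Q_{fe}^{2j}$ with $Q_{fe}=x^\dagger_{t(e)f}h_{s(e)f}x_{s(e)f}$, and the proposition reduces to a critical-point analysis of a one-variable ``face action'': $|Q_{fe}|=1$ gives the gluing constraint edge by edge, the Gaussian term gives area matching, and $\delta_j S_f=0$ gives $\sum_e\phi_{fe}=0\bmod 4\pi$, i.e.\ $\prod_v h_{vf}=1$. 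Your global route via $\|\prod_v H_{vf}\|\le\prod_v\|H_{vf}\|$ plus Cauchy--Schwarz on the areas is equivalent, and the equality case of submultiplicativity for $2\times2$ matrices is literally the chaining condition $h_{s(e)f}x_{s(e)f}\propto x_{t(e)f}$; what the paper's per-factor insertion buys is that this equality-case unpacking comes for free and is manifestly local to each edge, whereas your version states the suppression bound more invariantly but must then reconstruct the edge-by-edge conditions by hand. Two points to repair in the write-up: (i) the growth of $\chi^{(j)}(\mathcal G_f)$ in $j$ is governed by the spectral radius of $\mathcal G_f$, not by the boost parameter $|Z_f|$ of its polar decomposition; for non-normal $\mathcal G_f$ these differ, so your displayed rate is only an upper bound --- harmless, since spectral radius $\le$ operator norm keeps the inequality chain closed, and at saturation the gluing constraint forces $\prod_v h_{vf}$ to stabilize the common flux axis, making $\mathcal G_f$ normal so the bound is attained; (ii) the oscillation from a residual rotation is a single phase $e^{-2ij\,\theta_f/2}$ (the second eigenvalue is exponentially subleading, so no cosine), and since $j$ is half-integer its stationarity forces $\theta_f=0\bmod 4\pi$, i.e.\ $\prod_v h_{vf}=+\mbb 1$ exactly rather than $\pm\mbb 1$, matching the paper's $\bmod\,4\pi$ bookkeeping.
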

%%%

As expected, the first equation \eqref{glueing constraint} is the constraint that imposes the \emph{glueing} of the vertices, meaning that the fluxes match at the interface of two vertices. To see this, let us explain the formula considering two adjacent vertices $v$, $v'$ that bound an edge $e$ in the 2-complex. Consider one of the oriented faces $f\supset e$, and say the induced orientation on $e$ is such that $v$ is the source vertex, or $\epsilon_{vef}=-\epsilon_{v'ef}=1$. Notice that on the edge $e$ we have always two fluxes defined for the same face. A source flux in the boundary of one vertex, in this case $X_{v'f}$, and a target flux $h_{vf}\triangleright X_{vf}$ in the other vertex. The condition \eqref{glueing constraint} requires the two to be equal. In particular we see that their equality implies that $|X_{vf}|=|X_{v'f}|$.

Since the modulus of the flux is proportional to the area of the 2-cell $f$, we have just seen that the glueing constraint implies in particular the \emph{area-matching} constraint at the edges\footnote{The area matching constraint is stronger then just a semiclassical equation. It holds exactly at the quantum level on the quantum numbers of the area, namely on the spin variables: $j_{vf}=j_f$ for all $v\subset f$.}. Thus we have that for each face, all the quantities $|X_{vf}|$ coincide, or the face amplitude is suppressed.

We can also give an interpretation of the glueing constraint in terms of the geometry of polyhedra, provided that we use the closure constraint \eqref{closure} of the holomorphic vertex amplitude. Indeed when the glueing constraint holds, we can define the following edge-face area 3-vectors,
\begin{align}
A_{ef}:=
\begin{cases}
X_{v'f}=h_{vf}\triangleright X_{vf},\quad \epsilon_{evf}=+1\\
X_{vf}=h_{v'f}\triangleright  X_{v'f},\quad \epsilon_{evf}=-1
\end{cases}
\end{align}
for arbitrary orientations, where $A_{ef}$ is a 3-vector using the isomorphism $su(2)\simeq \mbb R^3$. Provided the following closure constraint holds,
\begin{align}
\sum_{f\supset e} \epsilon_{vef}A_{ef}=0,
\end{align}
a theorem by Minkowski \cite{Bianchi:2010gc} implies that there exists a unique polyhedron in $\mbb R^3$, up to translations, such that the vectors $\epsilon_{vef}A_{ve}$ are the external normals to the faces of the polyhedron, normalized to the area of the faces. Notice that if we require $\epsilon_{vef}A_{ve}$ to be the internal normals, we determine the parity-related polyhedron. Thus we find that thanks to the glueing constraint the vertex labels $H_{vf}$ and $H_{v'f}$ determine the same polyhedral geometry at the edge $e$.

So far so good. What about the last condition \eqref{flatness constraint}? Let us say that it implies an unexpected constraint on space-time curvature for all values of the Barbero-Immirzi parameter, except for $\gamma=0$ where \eqref{flatness constraint} becomes redundant. In fact, it is analogous to the `flatness constraint' agued in the spin-intertwiner spinfoam representation in \cite{Bonzom:2009hw}, here analyzed in detail in terms of classical geometric variables thanks to the new holomorphic representation.

To see this important point more in detail we postpone the full analysis of the anti-holomorphic face amplitude to the special case of a simplicial 2-complex, which is technically easier.
%%%
%%%%%%%%%%%%
%%%%%%%%%%%%%%%
%%%%%%%%%%%%%%%
\section{A special case: the 4-simplex} \label{section: 4-simplex}

From this point of the discussion we specialize our analysis of the Lorentzian EPRL model to a simplicial 2-complex. We start with the analysis of the simplicial holomorphic vertex amplitude. The graph $\Gamma_v$ of a simplicial vertex $v$ is the complete graph with five nodes. Thus every vertex bounds five edges and ten faces.

For our analysis, we first need some definitions in classical simplicial geometry.

\begin{definition}[Geometric 4-simplex] A geometric 4-simplex $\sigma_v$ is the convex hull of five points in $\Mink$ not all of which lie in the same 3-plane. 
\end{definition}
In this paper we consider only 4-simplices where all the triangles are space-like. This choice is always understood. The reason is that the representations of the Lorentz group used to define the spinfoam model restrict automatically the partition function to such space-like geometries.

We label the standard orientation $dx^0\wedge dx^1\wedge dx^2\wedge dx^3$ of $\Mink$ with a sign $\mu_v=1$, and the opposite one with $\mu_v=-1$. The Hodge duality requires an orientation. Thus we can define the oriented Hodge duality map, in terms of the standard orientation, as $*_{\mu_v}:=\mu_v *$.
\begin{definition}[Oriented geometric 4-simplex] An oriented geometric 4-simplex $(\sigma_v,\mu_v)$ is a geometric 4-simplex together with an orientation $\mu_v$ of $\Mink$. 
\end{definition}
The orientation $\mu_v$ provided to $\Mink$ induces an orientation on the 3-dimensional boundary of the 4-simplex, formed by five tetrahedra, which in turn induces an orientation in the boundary of each tetrahedron. The bivectors of a 4-simplex can be identified with its oriented triangles, defined as follows.
\begin{definition}[Area bivectors] \label{area bivectors}
The area bivectors $B_{f}(\sigma_v)$ of an oriented geometric 4-simplex $(\sigma_v,\mu_v)$ are defined as
\begin{align}
B_{f}(\sigma_v,\mu_v):=\frac12 a_{f}*_{\mu_v}\frac{N_{s(vf)}(v)\wedge N_{t(vf)}(v)}{|N_{s(vf)}(v)\wedge N_{t(vf)}(v)|},
\end{align}
where $a_f=|B_f|$ is the area of the triangle $f$ computed with the Minkowski metric, and $N_e(v)$ is the external unit normal to the tetrahedron $e$.
\end{definition}
Notice that in the previous formula the orientation of the bivector, or the triangle $f$, is the one induced in the boundary of the source tetrahedron $s(vf)$, which in turn has an orientation pull-backed from $\mu_v$, as discussed previously. 

The following definition of non-degenerate bivector geometry is useful to state the precise relation between bivectors and simplices, which is the content of the Barrett-Crane theorem.
\begin{definition}[Non-degenerate bivector geometry] A bivector geometry $B_{f}(v)$ is said non-degenerate if on each edge the set of bivectors
$$B_{f}(v),\quad f\supset e$$
spans a 3-plane, and for any two faces $f$, $f'$ that do not share an edge, the bivectors
$$B_{f}(v),B_{f'}(v)$$
span $\Mink$.
\end{definition}

The following theorem is a straightforward generalization to Lorentzian signature of the original theorem \cite{Barrett:1997gw} for the Euclidean case. Furthermore, we have adapted the theorem to our definitions concerning simplicial geometry with orientations.
\begin{theorem}[Barrett-Crane theorem]Given an oriented Lorentzian geometric 4-simplex $(\sigma_v,\mu_v)$, its area bivectors $B_f(\sigma_v,\mu_v)$ satisfy the bivector geometry constraints and are non-degenerate. Conversely, given a non-degenerate bivector geometry $B_{f}(v)$ there exists a unique (up to inversion $x^\mu\rightarrow -x^\mu$ and translations) oriented geometric 4-simplex $(\sigma_v,\mu_v)$ such that
\begin{align}
B_{f}(\sigma_v,\mu_v)=B_{f}(v).
\end{align}
\end{theorem}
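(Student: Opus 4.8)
The plan is to prove the two directions separately: the forward implication is essentially a computation, while the converse is the substance of the theorem.

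\emph{Forward direction.} Let $(\sigma_v,\mu_v)$ be an oriented geometric $4$-simplex. Since a triangle $f$ is the intersection of the two tetrahedra $s(vf)$ and $t(vf)$, its $2$-plane is orthogonal to both $N_{s(vf)}(v)$ and $N_{t(vf)}(v)$; comparing with Definition \ref{area bivectors} one reads off that $B_f(\sigma_v,\mu_v)$ is, up to the positive factor $a_f$, the oriented $2$-plane of $f$ itself, so simplicity $B_f\wedge B_f=0$ is immediate (equivalently, the Hodge dual of the decomposable bivector $N_{s(vf)}\wedge N_{t(vf)}$ is decomposable). If $f,f'$ both contain the tetrahedron $e$, their $2$-planes lie in the common $3$-plane $P_e$ of $e$, whence $B_f\wedge B_{f'}\in\bigwedge^4 P_e=0$, which is cross-simplicity. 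The four triangles bounding $e$ are the faces of a closed (tetrahedral) polyhedron inside $P_e$, so their area normals sum to zero, and with the signs $\epsilon_{vef}$ tracking induced orientations this is exactly $\sum_{f\supset e}\epsilon_{vef}B_f(v)=0$. Non-degeneracy follows because a genuine $4$-simplex has genuinely $3$-dimensional tetrahedra (so the four $B_f$ at $e$ span $\bigwedge^2 P_e$) and spans $\Mink$ (so two triangles lying in no common tetrahedron span $\Mink$).

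\emph{Converse: existence.} I would reconstruct the simplex tetrahedron by tetrahedron and then glue. Fix the vertex graph on nodes $1,\dots,5$, write the tetrahedra as $\hat\imath$ and the triangles as $f_{ij}=\hat\imath\cap\hat\jmath$. \textbf{Local step.} For each tetrahedron $e=\hat\imath$ the four bivectors $\{B_f:f\supset e\}$ are simple, pairwise cross-simple, and by non-degeneracy span a $3$-plane $P_e\subset\Mink$; reading each bivector in $\bigwedge^2 P_e$ as an area-weighted normal vector via the Hodge map internal to $P_e$, the closure constraint says these four vectors sum to zero, so the Minkowski-type reconstruction theorem (cf.\ the one quoted in Section \ref{section: semiclassical face}) produces a unique, up to translation within $P_e$, non-degenerate tetrahedron $T_e\subset\Mink$ whose area bivectors are the $B_f$. \textbf{Gluing step.} Place $T_{\hat 5}$ anywhere; for each $i\le 4$ the tetrahedron $T_{\hat\imath}$ shares with $T_{\hat 5}$ precisely the triangle $f_{i5}$, and since $B_{f_{i5}}$ fixes both its $2$-plane and its area while the internal shape of $f_{i5}$ is already determined inside $T_{\hat 5}$, one positions $T_{\hat\imath}$ in $\Mink$ by a rotation about that $2$-plane so that its copy of $f_{i5}$ coincides with the one in $T_{\hat 5}$, the remaining one-parameter (dihedral) freedom being pinned down by requiring the other three bivectors of $T_{\hat\imath}$ to equal the prescribed $B_{f_{ij}}$. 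One then verifies that the five placed tetrahedra bound a single geometric $4$-simplex $\sigma_v$ and that recomputing area bivectors from Definition \ref{area bivectors}, with $\mu_v$ the induced orientation, returns the given $B_f$.

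\emph{Uniqueness, the main obstacle, and a caveat.} The step I expect to be the real obstacle is verifying consistency of the gluing: that the dihedral angles fixed at $T_{\hat 5}$ also make the triangles $f_{ij}$ with $i,j\le 4$, shared by $T_{\hat\imath}$ and $T_{\hat\jmath}$, match in \emph{full shape} and not merely in $2$-plane and area, so that no holonomy obstruction survives around a triangle; this is where closure, cross-simplicity and non-degeneracy are used in full force, exactly as in the original Euclidean argument of \cite{Barrett:1997gw}, non-degeneracy rigidifying the configuration so that matching areas and planes forces matching shapes. Uniqueness up to $x^\mu\to-x^\mu$ and translations then follows from the uniqueness in the local step together with the rigidity of the gluing, the inversion being invisible because it sends each $N_e\mapsto -N_e$ and hence leaves every $N_{s(vf)}\wedge N_{t(vf)}$, thus every $B_f$, unchanged. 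A secondary technical point is that the induced metric on $P_e$ may be Euclidean or Lorentzian, both of which already occur when all triangles are spacelike, so the Minkowski-type reconstruction in the local step must be read in the corresponding signature; degenerate (null) $P_e$ are excluded by restricting, as in the statement, to the geometries actually arising from the Lorentz representations of the model.
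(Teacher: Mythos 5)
The paper does not actually prove this theorem: it is stated as a ``straightforward generalization'' of the Euclidean result and the burden is carried entirely by the citation to Barrett--Crane; nothing in the appendix addresses it. So there is no in-paper argument to compare yours against, and your proposal has to be judged on its own. Your forward direction is correct and complete: $*_{\mu_v}(N_{s}\wedge N_{t})$ is the (simple) bivector of the triangle's $2$-plane, cross-simplicity is $\bigwedge^4$ of a $3$-plane vanishing, closure is the closedness of $\partial e^*$, and non-degeneracy follows from the simplex being genuinely $4$-dimensional. The uniqueness-class observation that the inversion $x^\mu\to-x^\mu$ sends $N_e\mapsto -N_e$ and hence fixes every $N_s\wedge N_t$ is also the right reason that inversion is invisible to the bivectors.

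The converse, however, contains a genuine gap, and it sits exactly where you put the flag. The local step (Minkowski reconstruction of each tetrahedron inside its $3$-plane $P_e$) is fine, but the gluing step assumes that the triangle $f_{ij}$, reconstructed \emph{independently} inside $T_{\hat\imath}$ and inside $T_{\hat\jmath}$, is congruent in both --- you only know a priori that its area and its $2$-plane agree, since that is all the single bivector $B_{f_{ij}}$ records. Deducing shape matching of the shared triangles, and then that the five glued tetrahedra close up into the boundary of one convex $4$-simplex with no holonomy obstruction, \emph{is} the content of the reconstruction theorem; writing ``this is where closure, cross-simplicity and non-degeneracy are used in full force, exactly as in \cite{Barrett:1997gw}'' defers precisely the part that needed proving. (For comparison, the standard argument avoids this by not reconstructing tetrahedra at all: the ten simple bivectors determine ten $2$-planes, cross-simplicity and closure force the four planes at each node into a common hyperplane, non-degeneracy puts the five hyperplanes in general position so that they bound a simplex up to scale, translation and inversion, and the areas then fix the scale; shape matching is automatic because all triangles are carved out of one arrangement of hyperplanes.) A secondary omission: the theorem asserts uniqueness of the \emph{oriented} simplex $(\sigma_v,\mu_v)$, and you do not argue why the orientation $\mu_v$ is pinned down --- i.e.\ why $(\sigma_v,+)$ and some $(\sigma_v',-)$ cannot yield the same bivectors; this requires noting that flipping $\mu_v$ while preserving the $B_f$ forces a parity transformation of the simplex, which changes the bivectors unless the geometry is degenerate.
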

The first part of the theorem is pretty obvious geometrically. The most important part is the second one, namely the reconstruction part: any non-degenerate bivector geometry arises from a unique 4-simplex, up to obvious symmetries of the bivectors.

At this point we observe that the proposition \ref{proposition: vertex} can be refined a little bit, in order to extract information about the uniqueness of the elements $G_{vf}$, and determine the various sectors of the possible solutions. 

This useful information is in the following
\begin{lemma} \label{lemma: classification vertex solutions}
In the proposition \ref{proposition: vertex}, if there exist elements $G_{ve}$ which determine a non-degenerate bivector geometry, they are unique up to a rigid $G_v\in SL(2,\mbb C)$ transformation at all $e\supset v$, 
\begin{align}
G_{ve}\rightarrow G_v G_{ve},
\end{align}
a parity transformation at all $e\supset v$,
\begin{align}
G_{ve}\rightarrow G^*_{ve}:=(G_{ve}^\dagger)^{-1},
\end{align}
and a spin lift symmetry\footnote{The spin-lift symmetry in $SL(2,\mbb C)$ should not be confused with the space-time inversion symmetry. Space-time inversion $x^\mu\rightarrow -x^\mu$ is a $SO(1,3)$ Lorentz transformation not connected to the identity, thus not in $SL(2,\mbb C)$ under the covering map.} at some $e\supset v$,
\begin{align}
G_{ve}\rightarrow -G_{ve}.
\end{align}
\end{lemma}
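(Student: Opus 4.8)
The plan is to view Lemma~\ref{lemma: classification vertex solutions} as a rigidity statement for the system of equations \eqref{wedge relation}, starting from the reconstruction already available through the Barrett--Crane theorem. First I would assume that $\{G_{ve}\}$ and $\{\tilde G_{ve}\}$ are two solutions of \eqref{wedge relation} (for the same labels $H_{vf}$, possibly with different $\xi_{vf}$, $\tilde\xi_{vf}$) both giving rise, via \eqref{transport constraint}, to non-degenerate bivector geometries $B_f(v)$ and $\tilde B_f(v)$. The key observation is that \eqref{wedge relation} forces, for every face $f\supset v$, the bivector $B(X_{vf})$ to be transported by $G_{v,s(vf)}$ and by $G_{v,t(vf)}h_{vf}$ to the \emph{same} bivector in the vertex frame (this is exactly the content of \eqref{transport constraint}); so both $\{G_{ve}\}$ and $\{\tilde G_{ve}\}$ build bivector geometries out of the \emph{same} boundary data $(h_{vf},X_{vf})$. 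Hence the question reduces to: how much freedom is there in the choice of $G_{ve}$ once the boundary data and the requirement of non-degeneracy are fixed?

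Next I would exploit non-degeneracy edge by edge. At a fixed edge $e\supset v$, the time-gauge bivectors $B(X_{vf})$ for $f\supset e$ all lie in the 3-plane orthogonal to $\mathcal N$, and $G_{ve}$ maps this to the 3-plane orthogonal to $G_{ve}\triangleright\mathcal N$; non-degeneracy says these bivectors span that 3-plane. Since the $B_f(v)=G_{ve}\triangleright B(X_{vf})$ are the same for both solutions, $G_{ve}$ and $\tilde G_{ve}$ agree on a spanning set of a 3-plane of bivectors, hence on the induced action on the orthogonal $\mbb R^3$, hence the $SO^+(1,3)$ elements $U(G_{ve})$ and $U(\tilde G_{ve})$ differ at most by a transformation fixing that 3-plane pointwise --- i.e. by the identity or by the reflection $x^\mu\mapsto x^\mu - 2\,\mathrm{(proj\ onto\ }G_{ve}\triangleright\mathcal N)$. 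Combining the information across \emph{all five} edges, and using that two faces not sharing an edge span $\Mink$ (the second clause of non-degeneracy), I would promote this to: $U(\tilde G_{ve}) = U(G_v)\,U(G_{ve})$ for a single global Lorentz element $G_v$, possibly composed with a global parity $x^\mu\to-x^\mu$; the reflection ambiguities at individual edges cannot be independent because a bivector shared by two edges must be transported consistently, and non-degeneracy rules out the remaining freedom, forcing the per-edge ambiguity to be either globally trivial or the global parity. This reproduces the first two items of the lemma. The spin-lift ambiguity $G_{ve}\to -G_{ve}$ is then immediate: it is invisible in \eqref{wedge relation}, since $B(X_{vf})$ is transported in the adjoint/vector representation where $\pm G$ act identically, and it can be chosen independently at each edge, so modding out by the connected covering map $SL(2,\mbb C)\to SO^+(1,3)$ leaves precisely a $\mbb Z_2$ at each $e$.

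The main obstacle I anticipate is the passage from \emph{per-edge} uniqueness to \emph{global} uniqueness: a priori one has a global transformation and a possible reflection at each of the five edges, and one must show the reflections are all synchronized into a single space-time parity. The leverage for this is exactly the Barrett--Crane reconstruction: a non-degenerate bivector geometry determines an oriented 4-simplex $(\sigma_v,\mu_v)$ up to inversion and translation, and the $G_{ve}$ are (up to the $SU(2)$ stabilizer of $\mathcal N$, which is harmless because the $B(X_{vf})$ already fix the time-gauge normals) the transformations sending the time-gauge tetrahedron normals to the external normals $N_e(v)$ of that 4-simplex; so the residual freedom in $\{G_{ve}\}$ is precisely the residual freedom in reconstructing the 4-simplex, namely $SL(2,\mbb C)$ (global frame rotation), inversion (global parity, which flips all $N_e(v)$ coherently), and the spin lift. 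I would make this last identification careful, since one must check that a per-edge reflection which is \emph{not} the restriction of the global parity would violate either simplicity/cross-simplicity at a shared edge or the Barrett--Crane non-degeneracy hypothesis; establishing that incompatibility rigorously is the crux of the argument, and I expect it to consume most of the work, the rest being bookkeeping with the covering map $U$.
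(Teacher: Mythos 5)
Your argument has a genuine gap at its central step. When you write ``Since the $B_f(v)=G_{ve}\triangleright B(X_{vf})$ are the same for both solutions, $G_{ve}$ and $\tilde G_{ve}$ agree on a spanning set of a 3-plane of bivectors,'' you are assuming precisely what has to be proven. Two solutions $\{G_{ve}\}$ and $\{\tilde G_{ve}\}$ of \eqref{wedge relation} built on the same boundary data produce, via \eqref{transport constraint}, two a priori \emph{different} bivector geometries $B_f(v)$ and $\tilde B_f(v)$ --- already a rigid transformation changes every $B_f(v)$ --- and nothing in the hypotheses identifies them. Consequently the per-edge conclusion that $U(G_{ve})$ and $U(\tilde G_{ve})$ differ by a transformation fixing a common 3-plane does not follow, and the subsequent synchronization of the edge-wise reflections into a single global parity is built on that unsupported premise. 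Your closing paragraph correctly senses that the Barrett--Crane reconstruction is the leverage, but you use it only to parametrize the freedom in reconstructing a 4-simplex from a \emph{fixed} non-degenerate bivector geometry (inversion and translations); what is actually needed is a statement relating the two \emph{different} bivector geometries arising from the two solutions, and that is not supplied.

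The paper closes this gap with an ingredient absent from your proposal: the intrinsic boundary geometry. The labels $H_{vf}$ determine by themselves a three-dimensional Riemannian Regge triangulation of the 3-sphere with a unique set of edge lengths. By the classification of boundary data in Barrett et al., for fixed edge lengths the boundary is, exclusively, that of a Lorentzian, a Euclidean, or a degenerate 4-simplex, and the edge lengths determine the Lorentzian 4-simplex metric uniquely; hence \emph{any} two reconstructions from the same $H_{vf}$ are congruent, i.e.\ related by a Poincar\'e transformation (possibly orientation-reversing). The listed symmetries --- rigid $SL(2,\mbb C)$, parity, spin lift --- then exhaust exactly this residual Poincar\'e freedom modulo translations and space-time inversion, which the bivectors do not register. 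Note also that the paper's uniqueness argument allows the competing solution to determine a possibly \emph{degenerate} bivector geometry and uses the mutual exclusivity of the three cases to rule this out; your proposal does not address this, since you assume both solutions non-degenerate from the outset.
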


By this lemma, a 4-simplex is always determined up to $SO^+(1,3)$ rigid transformations, parity, inversion and translations. Thus it is determined up to general Poincar\'e transformations, as expected. This shows that the full group of local space-time symmetries arises in the classical simplicial geometry.

Other useful facts follow from lemma \ref{lemma: classification vertex solutions}. For instance, it implies that the non-degeneracy of the bivector geometry at a vertex $v$ is an intrinsic property of the flux-holonomy boundary variables $H_{vf}$. In other words, either they imply a degenerate geometry, or they imply a non-degenerate geometry. Thus the two sectors do not mix for fixed holomorphic vertex labels $H_{vf}$.

Moreover, we can see that there are in principle two parity sectors associated to a fixed set of vertex boundary labels. The two sectors are characterized by the sign $\mu_v$ of the reconstructed orientation, as expected since parity is not orientation-preserving. However, the Ashtekar holonomy selects only one parity sector as discussed in section \ref{section: parity}.

Thus we may summarize what we have learnt by saying that the labels of the holomorphic vertex determine three mutually exclusive cases, or sectors. These are
\begin{itemize}
\item a) \emph{non-degenerate geometric}: a non-degenerate bivector geometry
\item b) \emph{degenerate geometric:} a degenerate bivector geometry 
\item c) \emph{non-geometric:} do not determine a bivector geometry
\end{itemize}
In the case c) the vertex amplitude is suppressed at small $t$ by corollary \ref{corollary: bivector}, this is the non-geometric sector. The case b) do not necessarily imply a suppression, this is the degenerate geometric sector. Though interesting, we shall not consider this sector in the paper. The case a) is the non-degenerate geometric sector, our main interest.

\section{Ashtekar connection in spinfoams}\label{section: Ashtekar twists}
So far we did not consider the full constraints implied by proposition \ref{proposition: vertex}. This means that there can be coherent state labels determining a non-degenerate vertex geometry that nevertheless suppress the vertex amplitude for $t\rightarrow 0$. This may sound weird, but we have to remember that the variables of the holomorphic representation contain more information than the more common spin-intertwiner labels, where similar geometric constraints have been found. This extra input can be identified with the variable conjugate to the areas \cite{Bianchi:2009ky,Freidel:2010aq}, which is an extrinsic curvature scalar, as we shall see in a moment.

The boundary of a 4-simplex has an intrinsic as well as an extrinsic geometry. The extrinsic geometry of a 3-surface describes the embedding properties of the surface in four dimensions. For the 3-boundary of a 4-simplex, it is provided by the 4-dimensional dihedral angles. These determine the amount of bending of the 3-boundary at each triangle. As in the Lorentzian Regge calculus, the dihedral angle $\Theta_{vf}$ at the triangle $f^*$ of a 4-simplex with space-like boundary is defined via the scalar product of the external normals $N_{s(vf)}(v)$, $N_{t(vf)}(v)$ of the two tetrahedra $s(vf)$, $t(vf)$ that share the triangle $f^*$, up to a sign. We have
\begin{align}
N_{s(vf)}(v)\cdot N_{t(vf)}(v)=:\cosh\Theta_{vf}.
\end{align}
The sign of $\Theta_{vf}$  is taken positive when both normals are future-pointing or past-pointing, negative when one is future-pointing and the other past-pointing. Let us come back to the main question of this section.

As we already observed at the end of section \ref{section: semiclassical vertex}, corollary \ref{corollary: bivector} is weaker then proposition \ref{proposition: vertex}. We have to impose one extra condition in order to recover the full constraints enforced by the holomorphic vertex amplitude. What is exactly this requirement, and its physical interpretation? The extra requirement is in fact that the torsion of the Ashtekar holonomies must match the extrinsic curvature of the boundary of the 4-simplex. To see this, we first need the decomposition of the $SL(2,\mbb C)$ transports into a rotation and a dihedral boost, as stated in the following
\begin{lemma} \label{lemma: natural decomposition of GG}
Given a set of vertex labels $H_{vf}$ such that there exist elements $G_{vf}$ which determine an oriented 4-simplex $(\sigma_{v},\mu_v)$, for each wedge we must have
\begin{align}\label{natural 1}
U(g^{-1}_{vf}G_{t(vf),v}G_{v,s(vf)})=e^{-\Theta_{vf} *_{\mu_v}B(X_{vf})},
\end{align}
or, equivalently,
\begin{align}\label{natural 2}
U(G_{v,t(vf)}g_{vf}G_{s(vf),v})=e^{\Theta_{vf} N_{s(vf)}\wedge N_{t(vf)}},
\end{align}
where $g_{vf}$ is the 3-dimensional spin holonomy, $N_{e}(v)$ is the unit external normal of the tetrahedron $e^*$ in the reconstructed 4-simplex $\sigma_v$, and $\Theta_{vf}$ the Lorentzian dihedral angle at the triangle $f^*$.
\end{lemma}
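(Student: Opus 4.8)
The plan is to start from Proposition \ref{proposition: vertex}, which for a non-suppressed vertex amplitude gives the wedge relation $U(h_{vf}^{-1}G_{t(vf),v}G_{v,s(vf)})=e^{-\xi_{vf}(\gamma+*)B(X_{vf})}$, and to combine it with the Barrett-Crane reconstruction theorem together with Lemma \ref{lemma: classification vertex solutions}. Since the $G_{ve}$ are assumed to determine a non-degenerate bivector geometry, the Barrett-Crane theorem produces an oriented geometric 4-simplex $(\sigma_v,\mu_v)$ whose area bivectors satisfy $G_{v,s(vf)}\triangleright B(X_{vf})=B_f(\sigma_v,\mu_v)=\frac12 a_f *_{\mu_v}(N_{s(vf)}\wedge N_{t(vf)})/|N_{s(vf)}\wedge N_{t(vf)}|$. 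First I would rewrite the left-hand side of the wedge relation by conjugating with $G_{v,s(vf)}$: acting on bivectors, $G_{v,s(vf)} e^{-\xi_{vf}(\gamma+*)B(X_{vf})} G_{v,s(vf)}^{-1} = e^{-\xi_{vf}(\gamma+*)B_f(\sigma_v,\mu_v)}$, using that $*$ commutes with Lorentz transformations and that conjugation is the adjoint action. This converts \eqref{wedge relation} into $U(G_{v,t(vf)} h_{vf}^{-1,\,\text{transported}} G_{s(vf),v}) = e^{-\xi_{vf}(\gamma+*)B_f(\sigma_v,\mu_v)}$, i.e. the combined $SL(2,\mathbb{C})$ transport around the wedge, written in the frame of the 4-simplex, is a pure rotation-boost generated by the triangle bivector and its dual.

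The next step is a geometric identification of the exponent. The bivector $N_{s(vf)}\wedge N_{t(vf)}$, with the second index lowered by $\eta$, generates exactly the boost in the timelike 2-plane spanned by the two tetrahedron normals; exponentiating it with parameter $\Theta_{vf}$ produces the Lorentz transformation carrying $N_{s(vf)}$ to $N_{t(vf)}$, which is the content of the dihedral-angle geometry set up just before the lemma via $N_{s(vf)}\cdot N_{t(vf)}=\cosh\Theta_{vf}$. Here I would carefully track the normalization: $*_{\mu_v}B(X_{vf})$ transported to the vertex frame is proportional to $N_{s(vf)}\wedge N_{t(vf)}$ (since $B_f$ itself is proportional to $*_{\mu_v}(N_s\wedge N_t)$ and $*_{\mu_v}*_{\mu_v}=\pm 1$ on the relevant bivector type — it is $-1$ on spacelike bivectors in Lorentzian signature, or $+1$, a sign I would pin down from Definitions \ref{bivectors time-gauge} and \ref{area bivectors}), so that $*_{\mu_v}B(X_{vf}) = \pm\frac12 a_f (N_{s(vf)}\wedge N_{t(vf)})/|N_{s(vf)}\wedge N_{t(vf)}|$. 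Matching the coefficient then forces the parameter multiplying $N_s\wedge N_t$ to be exactly $\pm\Theta_{vf}$, which — combined with the observation in the Barrett-Crane paragraph that $|B_f|=a_f=|X_{vf}|$ — fixes the proportionality. The sign and the choice between \eqref{natural 1} and \eqref{natural 2} (which differ by taking $G_{v,t}\cdot G_{s,v}$ vs. $G_{t,v}\cdot G_{v,s}$, i.e. by an inverse) then follow by bookkeeping, using the source/target conventions $\epsilon_{vef}$ and the orientation induced on the boundary tetrahedra by $\mu_v$.

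A subtlety I would address explicitly: the wedge relation from Proposition \ref{proposition: vertex} carries the full $(\gamma+*)B(X_{vf})$ generator, not just the $*B(X_{vf})$ piece appearing in \eqref{natural 1}. The resolution is that on the left-hand side $U$ is the image under the covering map, which sees only the $SO^+(1,3)$ rotation-boost part; the $\gamma B(X_{vf})$ term is a rotation generator (a rotation by angle $\propto\gamma|X_{vf}|$ about the axis $\hat X_{vf}$ in the spacelike 3-plane orthogonal to $\mathcal N$, respectively $G_{ve}\triangleright\mathcal N$), and I would argue that after transport to the 4-simplex frame this rotational part is precisely the 3-dimensional spin holonomy $g_{vf}$ being stripped off on the left of \eqref{natural 1} (so that $g_{vf}^{-1}$ times the full transport leaves only the boost $e^{-\Theta_{vf}*_{\mu_v}B(X_{vf})}$). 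In other words, the decomposition of the $SL(2,\mathbb{C})$ transport into "rotation $\times$ dihedral boost" is exactly the polar-type split of the wedge generator into its $\gamma B$ (rotation) and $*B$ (boost) pieces, and the lemma is asserting that the rotation piece \emph{is} the 3d spin connection holonomy $g_{vf}$. The main obstacle will be precisely this last identification — showing that the leftover $SU(2)$ part is the geometric spin holonomy $g_{vf}$ and not some arbitrary rotation — which I expect to require invoking the closure constraint \eqref{closure} together with the rigidity from Lemma \ref{lemma: classification vertex solutions} to rule out extra freedom, plus care with the spin-lift and parity ambiguities so that the equality holds in $SL(2,\mathbb{C})$ and not merely in $SO^+(1,3)$.
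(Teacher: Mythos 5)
Your plan starts from the wedge relation \eqref{wedge relation} of Proposition \ref{proposition: vertex}, but that relation is not among the hypotheses of the lemma and cannot be: the lemma is a purely \emph{geometric} statement about the reconstructed 4-simplex, assuming only that the $G_{ve}$ determine an oriented $(\sigma_v,\mu_v)$. The paper then \emph{substitutes} the lemma into \eqref{wedge relation} to derive the torsion constraint $\alpha_{vf}=\pm\gamma\Theta_{vf}\bmod 4\pi$ (Proposition \ref{proposition: extra constraint}); if the lemma were itself obtained from \eqref{wedge relation}, that derivation would collapse into a circle. The obstacle you correctly flag at the end --- showing that the leftover rotation is the geometric spin holonomy $g_{vf}$ --- is in fact fatal to your route: the rotational part of the Ashtekar holonomy is $g_{vf}e^{\alpha_{vf}\hat X_{vf}\cdot\vec\tau}$ with an a priori arbitrary twist $\alpha_{vf}$, and the matching of that twist to $\gamma\Theta_{vf}$ is precisely the dynamical constraint to be derived \emph{after} the lemma, not something that closure plus Lemma \ref{lemma: classification vertex solutions} can give you. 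Likewise, the parameter $\xi_{vf}$ in \eqref{wedge relation} is an arbitrary real number produced by the stationary-phase analysis; nothing in the wedge relation alone identifies it with the dihedral angle, so "matching the coefficient" does not force $\xi_{vf}=\pm\Theta_{vf}$.

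The paper's actual proof is an elementary geometric argument that you never carry out: after fixing the inversion ambiguity so that (say) both normals are future-pointing, the proper orthochronous elements $G_{ev}$ and $G_{e'v}$ put the two tetrahedra sharing $f^*$ into the time gauge and send both external normals to $\mathcal N$; by definition $g_{vf}$ is the unique diagonal $SU(2)$ rotation matching the two time-gauge copies of the triangle $f^*$ (with antiparallel normals), and it stabilizes $\mathcal N$. Hence the composite $G_{ve'}\,g_{vf}\,G_{ev}$ is a Lorentz transformation that preserves the triangle $f^*$ pointwise (up to translation) and carries $N_{s(vf)}$ to $N_{t(vf)}$, so it must be the boost $e^{\Theta_{vf}N_{s(vf)}\wedge N_{t(vf)}}$ in the timelike plane of the two normals, with rapidity equal to the dihedral angle (a separate case analysis when one normal is past-pointing fixes the sign of $\Theta_{vf}$). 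Conjugating by $G_{ev}$ and using the Barrett--Crane identity $N_{s(vf)}\wedge N_{t(vf)}=*_{\mu_v}G_{ve}\triangleright B(X_{vf})$ then yields \eqref{natural 1}. Your second paragraph contains the right raw ingredients (the normals, the Hodge dual, the normalization $|B_f|=a_f$), but without the step establishing that the composite transport stabilizes the triangle and maps one normal to the other, the identification of the exponent with $\Theta_{vf}$ is asserted rather than proved.
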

We recall the definition of the 3-dimensional Levi-Civita holonomy. From the eight labels $H_{vf}$, with $f\supset e,e'$, associated to two nodes $e$, $e'$, we can reconstruct two tetrahedra in $\mbb R^3$ using the Minkowski theorem. The Levi-civita holonomy from the source node to the target node is the unique $O(3)$ rotation that maps the triangle $f^*$ in the source tetrahedron to the triangle $f^*$ in the target tetrahedron, and the unit external normals to the triangles in the antiparallel configuration. In fact, since the normals are taken all external or all internal, and thus the two tetrahedra are consistently oriented, the transformation belongs to $SO(3)$. The $SU(2)$ transformation $g_{vf}$ is the 3-dimensional spin holonomy, namely the Levi-Civita holonomy up to the spin lift ambiguity.

In a similar way, the Ashtekar holonomy can be always split in two parts, the 3-dimensional spin holonomy and a `twist', in the following way,
\begin{align}\label{Ashtekar split}
h_{vf}=g_{vf}e^{\alpha_{vf}\hat X_{vf}\cdot\vec\tau}.
\end{align}
Comparing lemma \ref{lemma: natural decomposition of GG} with the equation $\eqref{Ashtekar split}$, we were able to write the 4-dimensional spin holonomy and the 3-dimensional Ashtekar holonomy in a similar fashion. They have a common part, which is the 3-dimensional spin holonomy $g_{vf}$. Substituting the expressions for $h_{vf}$ and the result of lemma \ref{lemma: natural decomposition of GG} into \eqref{wedge relation}, we must have
\begin{align}
e^{-\alpha_{vf} B(X_{vf})-\mu_v\Theta *B(X_{vf})}=e^{-\gamma\xi_{vf} B(X_{vf})-\xi_{vf}*B(X_{vf})},
\end{align}
namely
\begin{align}
\begin{cases}
\gamma\xi_{vf} = \alpha_{vf}+4k\pi, \\
\xi_{vf} = \mu_v \Theta_{vf},
\end{cases}
\end{align}
where $k$ is an integer. Notice that from \eqref{Ashtekar split} the correct periodicity of $\alpha_{vf}$  is $4\pi$. Thus we have found the following
\begin{proposition} \label{proposition: extra constraint}
A holomorphic vertex amplitude in the non-degenerate sector is non-suppressed for $t\rightarrow 0$ if and only if
\begin{align} \label{extra constraint}
\alpha_{vf} = \pm\gamma\Theta_{vf}\mod 4\pi,
\end{align}
for each face $f\supset v$, where $\alpha_{vf}$ and $\Theta_{vf}$ are the torsion of the Ashtekar holonomy and the dihedral angle at the triangle $f^*$ of the reconstructed 4-simplex $\sigma_v$, respectively. The $\pm$ sign is the 4-simplex orientation $\mu_v$.
\end{proposition}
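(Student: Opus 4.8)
The plan is to combine three facts already in place: the non-suppression criterion \eqref{wedge relation} of Proposition \ref{proposition: vertex}, the geometric decomposition of the $SL(2,\mbb C)$ transports in Lemma \ref{lemma: natural decomposition of GG}, and the splitting \eqref{Ashtekar split} of the Ashtekar holonomy $h_{vf}=g_{vf}e^{\alpha_{vf}\hat X_{vf}\cdot\vec\tau}$ into its Levi-Civita part $g_{vf}$ and its twist. Working, as in the hypothesis, in the non-degenerate geometric sector, I would first note that the elements $G_{ve}$ supplied by Proposition \ref{proposition: vertex} reconstruct, via Corollary \ref{corollary: bivector} and the Barrett--Crane theorem, an oriented geometric $4$-simplex $(\sigma_v,\mu_v)$, unique up to the rigid $SL(2,\mbb C)$, parity and spin-lift symmetries of Lemma \ref{lemma: classification vertex solutions}; the closure \eqref{closure} is automatic because the fluxes $X_{vf}$ close, and the rigid symmetry leaves $h_{vf}^{-1}G_{t(vf),v}G_{v,s(vf)}$ invariant. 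Hence ``$W_t(H_{vf})$ non-suppressed'' is equivalent to the existence of reals $\xi_{vf}$ for which \eqref{wedge relation} holds with these reconstructing $G_{ve}$.

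Next I would insert $h_{vf}^{-1}=e^{-\alpha_{vf}\hat X_{vf}\cdot\vec\tau}\,g_{vf}^{-1}$ into the left-hand side of \eqref{wedge relation}, use multiplicativity of the covering map $U$, and then replace $U(g_{vf}^{-1}G_{t(vf),v}G_{v,s(vf)})$ by $e^{-\Theta_{vf}*_{\mu_v}B(X_{vf})}$ via Lemma \ref{lemma: natural decomposition of GG}. With the bivector--Lorentz-algebra identification and $*_{\mu_v}=\mu_v*$, equation \eqref{wedge relation} collapses, wedge by wedge, to
\begin{align}
e^{-\alpha_{vf}B(X_{vf})-\mu_v\Theta_{vf}*B(X_{vf})}=e^{-\gamma\xi_{vf}B(X_{vf})-\xi_{vf}*B(X_{vf})}.
\end{align}
The key structural input is that $B(X_{vf})$ and $*B(X_{vf})$ --- the rotation and boost generators about the common axis $\hat X_{vf}$, in the normalization of Definition \ref{bivectors time-gauge} --- commute, so the two sides factorize through commuting one-parameter subgroups and can be matched component by component: the $*B$ part gives $\xi_{vf}=\mu_v\Theta_{vf}$, and the $B$ part gives $\gamma\xi_{vf}=\alpha_{vf}+4k\pi$ for some $k\in\mbb Z$ --- the $4\pi$ being the period of the $SU(2)$ exponent in \eqref{Ashtekar split}, with $k$ reabsorbing the spin-lift freedom of $g_{vf}$ (hence of $\alpha_{vf}$) left open by Lemma \ref{lemma: classification vertex solutions}.

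Eliminating $\xi_{vf}$ then gives $\alpha_{vf}=\gamma\mu_v\Theta_{vf}-4k\pi$, i.e. $\alpha_{vf}=\pm\gamma\Theta_{vf}\bmod 4\pi$ with the sign equal to the reconstructed orientation $\mu_v$, which is \eqref{extra constraint}. For the converse, given \eqref{extra constraint} one sets $\xi_{vf}:=\mu_v\Theta_{vf}$ and runs the identities of the previous paragraph backwards with the reconstructing $G_{ve}$, so that \eqref{wedge relation} and \eqref{closure} hold and Proposition \ref{proposition: vertex} delivers non-suppression. I expect the only delicate part to be the matching step: verifying that $B(X)$ and $*B(X)$ genuinely generate commuting subgroups with the normalizations implicit in Definition \ref{bivectors time-gauge}, and carefully bookkeeping which $2\pi$ versus $4\pi$ periodicities and which discrete ambiguities ($\mu_v$, spin lift, parity) survive in the statement --- there is no substantively new computation beyond the one displayed immediately before the statement.
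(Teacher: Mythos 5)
Your proposal is correct and follows essentially the same route as the paper: the paper also obtains the result by substituting the Ashtekar split \eqref{Ashtekar split} and Lemma \ref{lemma: natural decomposition of GG} into the wedge relation \eqref{wedge relation} of Proposition \ref{proposition: vertex}, arriving at the same exponential identity and matching the commuting $B(X_{vf})$ and $*B(X_{vf})$ components to get $\xi_{vf}=\mu_v\Theta_{vf}$ and $\gamma\xi_{vf}=\alpha_{vf}+4k\pi$. Your added remarks on the commutativity of the rotation and boost generators about $\hat X_{vf}$ and on the converse direction only make explicit what the paper leaves implicit.
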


The proposition \ref{proposition: extra constraint} uncovers the geometric meaning of the extra condition: the difference between the $SL(2,\mbb C)$ holonomy and the Ashtekar holonomy is in the way they code the extrinsic geometry of space. The first \emph{bends} the tetrahedra in four dimensions, creating a non-trivial extrinsic geometry, whereas the second cannot bend, being a pure 3-dimensional rotation. However, the Ashtekar holonomy is smart: it performs a twist, or \emph{torsion}, of one tetrahedron with respect to the other, via a $U(1)\subset SU(2)$ rotation about the normal of the common triangle. The torsion angle $\alpha$ codes the 4-dimensional dihedral angle, and must match the `true' dihedral angle computed out of the 4-simplex geometry, otherwise the holomorphic vertex amplitude is suppressed. The matching of torsion and extrinsic curvature is the meaning of the constraint \eqref{extra constraint}.

In the language of the $SU(2)$ spin quantum numbers, this constraint is exactly equivalent to the requirement that the rapidly oscillating phase $\sim e^{ij\alpha}$ in the boundary state must cancel a similar phase factor in the dynamics in order to have a good semiclassical behavior, a well-known property of quantum-mechanical wave-packets first advocated in the spinfoam setting by the Rovelli's ansatz \cite{Rovelli:2005yj}, and crucial in the graviton propagator calculations \cite{Bianchi:2006uf,Alesci:2007tx,Bianchi:2009ri}.

The relation \eqref{extra constraint} together with \eqref{Ashtekar split} are the spinfoam analogous of the formula
\begin{align} \label{LQG Ashtekar}
A=\Gamma+\gamma K,
\end{align}
for the Ashtekar connection with real Barbero-Immirzi parameter in classical general relativity. Here $\Gamma$ is the 3-dimensional spin connection and $K$ the extrinsic curvature of a Cauchy 3-surface. In general relativity, the Ashtekar connection have torsion, and torsion \emph{is} the extrinsic curvature of space. Notice the correct presence of $\gamma$ in front of the extrinsic curvature in the spinfoam expression \eqref{extra constraint}. This concludes the semiclassical analysis of the holomorphic vertex amplitude.

The condition \eqref{extra constraint} was first found in the holomorphic representation in \cite{Bianchi:2010mw}. We have deepened the analysis of the physical meaning of this condition in terms of the torsion part of the Ashtekar connection.
%%%
%%%
\section{The role of parity and time-reversal}\label{section: parity}
The parity transformation (see lemma \ref{lemma: classification vertex solutions}) relates the two parity sectors of the bivector geometries. To see this in more detail, consider the spatial inversion $P:\vec x\rightarrow \vec x$, which acts on the 4-vectors in $\Mink$, and by extension on the bivectors. Notice that a bivector $B$ that lies in the 3-plane orthogonal to $\mathcal N$ is parity-invariant. Using this, it is easy to check that the parity transform of a boosted bivector can be written simply in terms of the action of another `starred' $SL(2,\mbb C)$ transformation as
\begin{align} \label{parity of bivector}
PG\triangleright B=G^*\triangleright B,
\end{align}
where we recall that the notation $G^*$ stands for the conjugate-inverse of a $SL(2,\mbb C)$ matrix. This descends immediately from a well-known result in representation theory, often used in particle theoretical physics: the defining representation of $SL(2,\mbb C)$ is the left-handed representation. The right-handed representation is obtained by taking the conjugate-inverse of the elements.

We notice also that by the inversion symmetry of a bivector, the previous equation holds for the time reversal $T:x^0\rightarrow -x^0$. Since the space-time inversion $PT:x^\mu\rightarrow -x^\mu$ leaves any bivector invariant, we have
\begin{align}\label{time-reversal of bivector}
TG\triangleright B=PTTG\triangleright B=PG\triangleright B=G^*\triangleright B.
\end{align}

We shall see how this relates to the holomorphic vertex amplitude. Consider flux-holonomy boundary variables $H_{vf}$ such that there are elements $G_{vf}$ that determine a non-degenerate bivector geometry $B_f(v)$, and in turn an oriented 4-simplex $(\sigma_v,\mu_v)$, up to inversion and translations. Then, by equation \eqref{parity of bivector}, or equivalently \eqref{time-reversal of bivector}, the elements $G^*_{vf}$ determine a non-degenerate bivector geometry as well, the relation between the two being
\begin{align}
B'_f(v)=PB_f(v)=TB_f(v).
\end{align}
In turn, the new bivector geometry determines an oriented 4-simplex which is related to the previous one by
\begin{align}\label{parity-related 4-simplex}
(\sigma'_v,\mu'_v)=(P\sigma_v,-\mu_v),
\end{align}
or equivalently, using the inversion ambiguity, by
\begin{align} \label{time-reversed 4=simplex}
(\sigma'_v,\mu'_v)=(T\sigma_v,-\mu_v),
\end{align}
where of course $\sigma'_v$ in \eqref{time-reversed 4=simplex} is not the same 4-simplex $\sigma'_v$ of \eqref{parity-related 4-simplex}. Thus $P$ and $T$, which are the discrete Lorentz transformations that do non preserve the space-time orientation, act on the 4-simplex \emph{and} on its orientation.

Given this, we realize an important feature of the holomorphic vertex amplitude. The vertex amplitude brakes the parity symmetry selecting a single orientation $\mu_v$. Indeed we recall that the $\pm$ sign in the equation \ref{proposition: extra constraint} relating the torsion of the Ashtekar holonomy to the dihedral angle is precisely the orientation $\mu_v$. As we have seen previously, the flux-holonomy variables $H_{vf}$ determine a 4-simplex up to general Poincar\'e transformations. However, `half' of these Poincar\'e-related 4-simplices have orientation $\mu_v=+1$, the other `half' $-1$.

Thus we can prescribe the Ashtekar torsion in order to match either the dihedral angle, or minus the dihedral angle. In this way only one parity sector of the family of reconstructed 4-simplices is allowed for a \emph{fixed} set of boundary variables. The symmetry between the two parity sectors is restored when we consider the set of \emph{all} flux-holonomy boundary variables. The theory does not distinguish them thanks to the integral over $H_{vf}$ that we have to perform so as to recover the full amplitude \eqref{holomorphic full}.

%%%
%%%

\section{Glueing of 4-simplices} \label{section: 4-simplex glueing}
We specialize the analysis of the face amplitude to the simplicial setting. In this section we consider only the first \eqref{glueing constraint} of the anti-holomorphic face amplitude equations. Together with the results for the holomorphic simplicial vertex amplitude, this enables us to state a reconstruction theorem for the full triangulation. 

\begin{definition}[Regge triangulation] A Regge triangulation is a 4-dimensional simplicial complex, together with a piecewise flat Lorentzian metric such that the 4-cells are isometric to geometric 4-simplices in $\Mink$.
\end{definition}
In a Regge triangulation, the metric is locally flat everywhere except on the 2-cells, which are the triangles of the triangulation. A Regge triangulation can be constructed via a pairwise glueing process of 4-simplices that have compatible geometries. 

Notice that we can always choose Cartesian charts $\phi_v$, each one covering a single 4-cell $v$. The Cartesian charts determine a tetrad, or inertial reference frames, namely four orthonormal tangent vector fields $e_I,\,I=0,\ldots,3$, one time-like and three space-like defined locally via
\be\label{tetrad}
e_I:=\delta_I^\mu\frac{\partial}{\partial x^\mu}.
\ee
With this choice of charts, the transition functions from the 4-cell $v$ to an adjacent 4-cell $v'$ are Poincar\'e transformations, that we may write as
\begin{align}
\mathcal P_{v'v}:=U_{v'v}\times \mathcal T_{v'v},
\end{align}
where $U_{v'v}$ is the rotation part, namely a Lorentz transformation, and $\mathcal T_{v'v}$ the translation part. The rotation part is called Levi-Civita holonomy. It is the parallel-transport matrix for vectors from the tangent space at $v$ to the one at $v'$.

In general, the Levi-Civita holonomy is a $O(1,3)$ transformation. If there is a choice of charts such that all Levi-Civita holonomies are $SO(1,3)$ transformations, the Regge triangulation is defined to be orientable.

Viewed in the respective inertial reference frames, the 4-cells are represented as geometric 4-simplices $\sigma_v$ in $\Mink$. We recall our assumption that the boundaries of these 4-simplices are space-like. Also, the 3-cell shared by two 4-cells has two different representations as a geometric tetrahedron in $\Mink$, the one in the chart $\phi_v$ and the one in the adjacent chart $\phi_{v'}$. The Levi-Civita holonomy from $v$ to $v'$ is thus the unique $O(1,3)$ Lorentz transformation $U_{v'v}$ that, together with a translation, maps the tetrahedron in one frame into the corresponding tetrahedron in the other frame, namely
\begin{equation}\label{tetrahedraglueing}
U_{v'v}\phi_v(x)=\phi_{v'}(x),
\end{equation}
for every point $x$ of the 3-cell, and brings the external unit normals in the anti-parallel configurarion,
\be\label{normalsglueing}
U_{v'v}N_e(v)=-N_e(v').
\ee

Only in a flat space-time region centered at $f$ we can choose a single Cartesian chart to cover all the 4-cells in the `loop' $v\supset f$, which yields trivial transition functions. In order to extract the information about space-time curvature from the transition functions, we can compute the loop Levi-Civita holonomy around a face $f$, which is the composition of the parallel transports in the closed sequence of edges $e$ that form the boundary of the face $f$. We have to specify also the orientation of the loop and the base vertex. We define the loop holonomy based at $v$ as
\begin{equation}
U_f(v):=U_{v,s(e_m)}\ldots U_{t(e_1),v},
\end{equation}
where $(e_1,\ldots,e_m)$ is the cyclic sequence of edges in a face bounded by $m$ edges, and $v=s(e_1)=t(e_n)$. A loop Levi-Civita holonomy around $f$ different from the identity signals the presence of a scalar curvature on the 2-cell $f^*$. The relevant gauge-invariant curvature information is the deficit angle.
\begin{definition}[Deficit angle] The deficit angle at a 2-cell $f^*$ is the rapidity $\theta_f$ of the boost part of the loop Levi-Civita holonomy $U_f(v)$, where the orientation of the boost is from $N_{s(vf)}(v)$ to $N_{t(vf)}(v)$.
\end{definition}
One can easily check that the deficit angle is independent of: the starting point $v$, the face orientation, the choice of local frames. Thus it is a function only of the metric of the Regge triangulation. The deficit angle at a 2-cell $f^*$ is zero if and only if the metric is flat in the space-time region formed by the 4-cells in the `loop' $v\subset f$.

Another equivalent definition that can be found in the literature is in terms of the 4-dimensional dihedral angles, that we discussed in section \ref{section: 4-simplex}. The deficit angle is simply their sum at a face, namely
\begin{definition}[Deficit angle, bis] The deficit angle at a 2-cell $f^*$ of a Regge triangulation is
\begin{align}
\Theta_f:=\sum_{v\subset f}\Theta_{vf},
\end{align}
where $\Theta_{vf}$ is the dihedral angle at the triangle $f^*$ of the 4-simplex representation of the 4-cell $v$ in a Cartesian chart.
\end{definition}
Observe that the dihedral angles depend only on the Poincar\'e-invariant geometry of the 4-simplex, confirming that the deficit angle is a function only of the metric.

Now we see how the spinfoam semiclassical constraints relate the holonomy-flux variables to a Regge triangulation. We begin with the following
\begin{proposition}[4-simplex glueing]\label{proposition: 4-simplex glueing}
Consider two sets $X_{vf}$ and $X_{v'f}$ of holonomy-flux variables for adjacent vertices $v$ and $v'$. Suppose there exist $SL(2,\mbb C)$ elements $G_{ve}$ and $G_{v'e}$ which determine non-degenerate bivector geometries at $v$ and $v'$ respectively, i.e. oriented 4-simplices $(\sigma_v,\mu_v)$ and $(\sigma_v',\mu_v')$ up to inversion and translations. Then the glueing equation \eqref{glueing constraint} implies that the Levi-Civita holonomy $U_{v'v}$ exists and is given by,
\begin{align}\label{eq: glueing theorem}
U_{v'v}:=\begin{cases}\mu_e U(G_{v'e}G_{ev}),&\mu_v=\mu_{v'}\\
\mu_e U(G_{v'e})PU(G_{ev}),&\mu_v\neq\mu_{v'}
\end{cases}
\end{align}
where $P$ is the spatial inversion with respect to the fiducial $\mathcal N$, $\mu_e:=p_v p_{v'}$ is an overall sign, and $p_v=\pm 1$ parametrizes the inversion ambiguity in the Barrett-Crane 4-simplex reconstruction.
\end{proposition}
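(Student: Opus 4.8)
The plan is to reconstruct the Levi-Civita holonomy $U_{v'v}$ directly from the two reconstructed 4-simplices and then verify that the formula \eqref{eq: glueing theorem} written in terms of the $G_{ve}$, $G_{v'e}$ is the correct one. First I would recall what the glueing equation \eqref{glueing constraint} gives us geometrically: on the shared edge $e=v\cap v'$, each vertex comes with bivectors $B_f(v)$ and $B_f(v')$ (for $f\supset e$) obtained from Proposition \ref{proposition: vertex} via \eqref{transport constraint}, and by the Barrett-Crane theorem these are the area bivectors of tetrahedra $\tau_e(v)\subset\sigma_v$ and $\tau_e(v')\subset\sigma_{v'}$. The glueing equation \eqref{glueing constraint} says precisely that, after transport to a common frame at the edge using $G_{ve}$ and $G_{v'e}$, the fluxes $X_{vf}$ (equivalently the bivectors $B(X_{vf})$ in the time-gauge) on the two sides coincide — this is the content of the discussion following Proposition \ref{proposition: face amplitude}. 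Hence the two tetrahedra $\tau_e(v)$ and $\tau_e(v')$, each expressed in its own 4-simplex's inertial frame, are isometric, and the candidate Levi-Civita holonomy is the Lorentz transformation carrying one into the other with antiparallel normals, as required by \eqref{tetrahedraglueing}--\eqref{normalsglueing}.

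The key computation is then to express this transformation through the group elements. By definition of the bivector geometry, $G_{ve}$ maps the time-gauge data at $e$ (the bivectors $B(X_{vf})$, which all lie in the 3-plane orthogonal to $\mathcal N$) into the frame of $\sigma_v$ where they become the area bivectors of $\tau_e(v)$; likewise $G_{v'e}$ maps the \emph{same} time-gauge data (by the glueing equation) into the frame of $\sigma_{v'}$ where they become the area bivectors of $\tau_e(v')$. Therefore $U(G_{v'e})U(G_{ve})^{-1}=U(G_{v'e}G_{ev})$ is a Lorentz transformation that carries $\tau_e(v)$, as sitting in $\sigma_v$, onto $\tau_e(v')$, as sitting in $\sigma_{v'}$. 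I would then check that it sends the external normal $N_e(v)$ to $-N_e(v')$: this follows because the Barrett-Crane reconstruction fixes the normal from the bivector data up to the inversion sign $p_v$, and the convention in Definition \ref{area bivectors} (orientation induced on the boundary of the source tetrahedron) together with the opposite induced orientations on the two sides of $e$ forces the antiparallel configuration, with the residual sign ambiguity collected into $\mu_e:=p_vp_{v'}$. When $\mu_v=\mu_{v'}$ the Hodge maps $*_{\mu_v}$ agree and this is the whole story; when $\mu_v\neq\mu_{v'}$ the bivectors on the $v'$ side are built with the opposite Hodge orientation, which by the parity relation \eqref{parity of bivector} is equivalent to replacing $G_{v'e}$ by $G_{v'e}^*$, i.e.\ inserting a spatial inversion $P$ — giving the second line of \eqref{eq: glueing theorem}. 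Finally I would note that $U_{v'v}$ so defined is automatically $SO^+(1,3)$ (up to the sign $\mu_e$) and independent of the $SL(2,\mbb C)$ lifts, and verify well-definedness against the residual symmetries of Lemma \ref{lemma: classification vertex solutions} (rigid $G_v$, parity, spin lift) — a rigid $G_v$ at $v$ changes $\sigma_v$ by a Lorentz transformation and correspondingly conjugates $U_{v'v}$, as it must.

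The main obstacle I anticipate is the careful bookkeeping of orientations and inversion signs: both the Barrett-Crane reconstruction (unique only up to $x^\mu\to -x^\mu$, parametrized by $p_v$) and the spinfoam orientation $\mu_v$ enter, and one must show that the physically meaningful object $U_{v'v}$ depends on them only through the controlled sign $\mu_e$ and the case split $\mu_v=\mu_{v'}$ versus $\mu_v\neq\mu_{v'}$. In particular, verifying that the normals come out antiparallel rather than parallel, and that the case $\mu_v\neq\mu_{v'}$ really produces the $P$ insertion and not, say, a $T$ or a sign, requires using the inversion ambiguity of the reconstruction to move freely between $P$-related and $T$-related representatives (as in \eqref{time-reversal of bivector}) and then pinning down the answer by a direct check on a single normal vector. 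The rest — that the map is a genuine isometry of the shared tetrahedron — is essentially immediate from the glueing equation \eqref{glueing constraint} plus the Barrett-Crane theorem, so the proof is mostly a matter of assembling these ingredients and tracking signs.
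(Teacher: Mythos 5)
Your proposal is correct and follows essentially the same route as the paper's proof: transport the shared bivectors back to the time-gauge at the edge $e$ with $G_{ve}^{-1}$, $G_{v'e}^{-1}$, use the glueing equation to identify the two reconstructed tetrahedra, read off $U(G_{v'e}G_{ev})$ as the candidate Levi-Civita holonomy, and track the inversion signs $p_v$, $p_{v'}$ into $\mu_e$ with the case split on $\mu_v$ versus $\mu_{v'}$ producing the $P$ insertion. The only cosmetic difference is that you derive the $P$ insertion from the parity relation \eqref{parity of bivector} while the paper does an explicit case analysis on the normals $p_v\epsilon_{vef}\vec n_{ef}$ using $\epsilon_{vef}=-\epsilon_{v'ef}$; these are the same mechanism.
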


A negative overall sign $\mu_e=-1$ in \eqref{eq: glueing theorem} is a space-time inversion. The first important thing to notice is that by the previous proposition the Levi-Civita holonomy has positive determinant if and only if the reconstructed orientations $\mu_v$ and $\mu_{v'}$ agree. As a consequence, the spinfoam $SL(2,\mbb C)$ elements cannot be always interpreted as the 4-dimensional spin-connection, for two reasons: one is the parity insertion \eqref{eq: glueing theorem} in the case of non-matching orientations $\mu_v\neq\mu_{v'}$, the second is the inversion ambiguity $\mu_e$ of the 4-simplex reconstruction.

We have shown that the holomorphic partial amplitude for a simplicial 2-complex is peaked on holonomy-flux Ashtekar variables which realize a collection of geometric 4-simplices in $\Mink$, up to the local symmetries of lemma \ref{lemma: classification vertex solutions}. Moreover, there always exists a Levi-Civita holonomy connecting adjacent 4-simplices, which means that the transition functions of a Regge triangulation are well-defined. Thus we have
%%%%%%%
\begin{corollary}[Reconstruction of the Regge triangulation] The simplicial holomorphic partial amplitude $Z_t(H_{vf})$ is non-suppressed for $t\rightarrow 0$ only if it determines a Regge triangulation, or a degenerate geometry.
\end{corollary}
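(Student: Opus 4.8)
The plan is to assemble the corollary directly from the two propositions on the local amplitudes together with the global-structure input of Proposition \ref{proposition: 4-simplex glueing}. First I would assume that the simplicial holomorphic partial amplitude $Z_t(H_{vf})=\prod_v W_t(H_{vf})\prod_f K_{V(f)t}(\prod_{v\subset f}H_{vf})$ is non-suppressed for $t\to 0$; by Definition \ref{definition: suppression} and the factorized form of $Z_t$, non-suppression of the product forces non-suppression of \emph{every} vertex factor $W_t(H_{vf})$ and \emph{every} (anti-holomorphic) face factor $\overline{K_{V(f)t}(\cdots)}$ separately, since each factor contributes a distinct exponential growth rate in its own flux variables and these cannot cancel against each other. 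From non-suppression of each $W_t(H_{vf})$, Corollary \ref{corollary: bivector} gives $SL(2,\mbb C)$ elements $G_{ve}$ determining a bivector geometry $B_f(v)$ at each vertex; discarding the degenerate case (which is explicitly set aside), the Barrett-Crane theorem upgrades this to a genuine oriented geometric 4-simplex $(\sigma_v,\mu_v)$ for every vertex $v$, unique up to inversion, translations, and the rigid/parity/spin-lift ambiguities of Lemma \ref{lemma: classification vertex solutions}.

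Next I would feed in the face amplitude. Non-suppression of each $\overline{K_{V(f)t}}$ invokes Proposition \ref{proposition: face amplitude}, yielding the glueing equation \eqref{glueing constraint} $X_{t(e)f}=h_{s(e)f}\triangleright X_{s(e)f}$ on every edge $e\subset f$. For each internal edge $e$ shared by adjacent vertices $v,v'$, the hypotheses of Proposition \ref{proposition: 4-simplex glueing} are now met — non-degenerate bivector geometries on both sides plus the glueing equation — so the proposition produces a well-defined Levi-Civita holonomy $U_{v'v}$ of the form \eqref{eq: glueing theorem}, satisfying the tetrahedron-matching condition \eqref{tetrahedraglueing} and the antiparallel-normals condition \eqref{normalsglueing}. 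I would then note that these $U_{v'v}$ are exactly the rotation parts of the transition functions $\mathcal P_{v'v}$ in the Cartesian-chart atlas \eqref{tetrad}: choosing the chart $\phi_v$ to be an isometry onto the reconstructed geometric 4-simplex $\sigma_v$, consistency of the shared 3-cell's two representations is precisely \eqref{tetrahedraglueing}, and the translation part $\mathcal T_{v'v}$ is fixed by matching a base point of the common tetrahedron. Hence the collection of 4-simplices $\{\sigma_v\}$ with these transition functions defines a 4-dimensional simplicial complex equipped with a piecewise-flat Lorentzian metric whose 4-cells are isometric to geometric 4-simplices in $\Mink$ — that is, a Regge triangulation in the sense of the definition above.

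The remaining point is coherence: the reconstruction at each vertex and each edge is only fixed up to the discrete ambiguities (inversion $p_v=\pm1$, orientation $\mu_v$, spin-lift signs), and one must check that these local choices can be (and are) made consistently so that the glued object is a bona fide complex rather than an ill-defined patchwork. Here Lemma \ref{lemma: classification vertex solutions} is the essential tool — it says the ambiguity at each vertex is \emph{exactly} a global Poincar\'e-type action, parity, inversion, and spin lift — so the $U_{v'v}$ differ by at most the overall sign $\mu_e=p_vp_{v'}$ and a possible parity insertion when $\mu_v\neq\mu_{v'}$, both of which are absorbed into the allowed transition-function freedom; the statement is deliberately asserted only \emph{up to} these symmetries (and the corollary is phrased as a necessary condition, ``only if''). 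I expect this bookkeeping of orientations and discrete ambiguities across overlapping vertex and edge reconstructions to be the main obstacle — not a conceptual one, but the place where care is needed to see that the partial-amplitude peakedness genuinely delivers a single globally consistent Regge triangulation (or else a degenerate geometry), rather than merely local simplicial data; everything else follows by stringing together Corollary \ref{corollary: bivector}, the Barrett-Crane theorem, Proposition \ref{proposition: face amplitude}, and Proposition \ref{proposition: 4-simplex glueing}.
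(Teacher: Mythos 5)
Your proposal is correct and follows essentially the same route as the paper: the paper's own justification is precisely the chain Corollary \ref{corollary: bivector} $\rightarrow$ Barrett--Crane reconstruction at each vertex, Proposition \ref{proposition: face amplitude} $\rightarrow$ glueing constraint, Proposition \ref{proposition: 4-simplex glueing} $\rightarrow$ existence of the Levi-Civita transition functions, with the degenerate sector set aside. Your extra paragraph on factor-wise non-suppression and on the bookkeeping of the discrete ambiguities $(p_v,\mu_v)$ only makes explicit what the paper leaves implicit.
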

%%%%%%%%%
In the next section we study in the simplicial setting the last equation implied by the semiclassical limit.
%%%%%%%%%
\section{The curvature constraint} \label{section: curvature constraint}
Let us summarize what we have done so far. We have seen in the previous sections that there are two kind of phase space constraints imposed by the holomorphic vertex amplitude at small $t$, namely the transport equation \eqref{wedge relation}, and the closure equation \eqref{closure}, plus the simplicity constraints which hold by construction. The anti-holomorphic face amplitude imposes the glueing constraint \eqref{glueing constraint} at the interfaces of vertices, namely across the edges.

In the case of a simplicial 2-complex, all these constraints determine a Regge triangulation with curvature, in general, as we have seen in the last section \ref{section: 4-simplex glueing}.

Here we analyze the effect of the triviality of the Ashtekar loop holonomy on a simplicial 2-complex, which is our last constraint. This is the second constraint \eqref{flatness constraint} from the anti-holomorphic face amplitude in the $t\rightarrow 0$ limit. Using the splitting \eqref{Ashtekar split}, we can write the constraint on the Ashtekar loop holonomy as
\begin{align}\label{Ashtekar loop collapse}
\prod_{v\subset f}h_{vf} = \prod_{v\subset f} g_{vf}e^{\alpha_{vf}\hat X_{vf}\cdot\vec\tau}=e^{(\sum_{v\subset f}\alpha_{vf})\hat X_{vf}\cdot\vec\tau}=1,
\end{align}
where the first holonomy in the ordered product is labeled with a generic couple $vf$. 

To obtain the equation \eqref{Ashtekar loop collapse}, in particular the second equality, we have used the defining property of the 3-dimensional Levi-Civita $SO(3)$ holonomy. Consider the two tetrahedra $s(vf)^*$ and $t(vf)^*$ in $\mbb R^3$, namely the tetrahedra in the time-gauge constructed from the holonomy-flux coherent state labels. Since $g_{vf}$ maps the triangle $f^*$ in the source tetrahedron $s(vf)^*$ into the triangle $f$ of the target tetrahedron $t(vf)^*$, this implies that for the closed loop around the face $f$ the loop Levi-Civita $SO(3)$ holonomy based at the edge $s(vf)$ stabilizes the triangle $vf$. Thus the loop $SU(2)$ spin holonomy must be the identity, up to the spin-lift ambiguity, namely
\begin{align}
\prod_{v\subset f}g_{vf}=\pm 1,
\end{align}
and by a simple calculation the second equality in \eqref{Ashtekar loop collapse} readily follows. Notice that the sign ambiguity in the previous equation can be always chosen as positive by appropriately defining the splitting \eqref{Ashtekar split}.

Now we use the constraint on the torsion $\alpha_{vf}$ enforced by the vertex amplitude, proposition \ref{proposition: extra constraint}, and we find immediately
\begin{align}
\sum_{v\subset f}\alpha_{vf}=\gamma\sum_{v\subset f}\mu_v\Theta_{vf}+4k\pi=4k'\pi,
\end{align}
where $k$, $k'$ are integers. When the 4-simplex orientations are such that $\mu_v=1$ or $\mu_v=-1$ for all vertices in the loop, we get
\begin{align}
\gamma\Theta=0\mod 4\pi,
\end{align}
which means that for $\gamma\neq 0$ at most countably many deficit angles are allowed. More precisely, the deficit angle must be zero up to multiples of $4\pi/\gamma$ multiples. We are tempted to interpret the non-zero deficit angles as accidental curvatures \cite{Hellmann:2012kz} since their origin is the periodicity in $SU(2)$.

On the contrary, for $\gamma=0$, which corresponds to the Lorentzian version of the Euclidean flipped spinfoam model \cite{Engle:2007qf} without Barbero-Immirzi parameter, the curvature constraint disappears, and the full continuous set of Regge curvatures is allowed! The flipped model is unphysical for it corresponds to the quantization of the sole Holst term of the classical action. However, it is possible that in a suitable small $\gamma$ limit the spinfoam amplitude can allow for a continuous set of non-trivial curvatures. This last observation seems to be in remarkable agreement with the flipped expansion studied in \cite{Magliaro:2011zz,Magliaro:2011dz}.

In the case of general orientations, the constrained quantity at each face is
\begin{align}\label{generalized deficit}
\gamma\sum_{v\subset f}\mu_v\Theta_{vf}=0\mod 4\pi,
\end{align}
namely the generalized deficit angle in the sense of Barrett and Foxon \cite{Barrett:1993db}. Equation \eqref{generalized deficit} still implies that space-time curvature is zero up to the accidental $4\pi/\gamma$ multiples, for the generalized deficit angle has precisely the same geometric content of the Regge deficit angle. The origin of the arbitrary signs $\mu_v$ in \eqref{generalized deficit} is in the fact that the EPRL model aims to quantize an action written in the tetrad formalism, and no restrictions on the tetrad orientations are imposed in the most popular version of the model. Recently, modified models with different behavior with respect to parity were introduced \cite{Engle:2011un,Rovelli:2012yy}.

%%%%%%%%%
%%%%%%%%%
%%%%%%%%%
%%%%%%%%%
\section{Conclusions}
We have studied a candidate holomorphic path integral representation for loop quantum gravity, obtained via the Segal-Bargmann transform of the $SU(2)$ holonomy formulation of the Lorentzian EPRL spinfoam model. The transform is defined with respect to the Ashtekar-Lewandowski-Marolf-Mour\~ao-Thiemann adaptation to loop quantum gravity of the Hall coherent states for $SU(2)$.

The holomorphic representation is a useful tool to analyze the spinfoam dynamics. By going beyond the previous single holomorphic vertex analysis \cite{Bianchi:2010mw}, we studied the partial amplitude for a general 2-complex and derived the constraints on the Ashtekar holonomies and on the conjugate gravitational fluxes which are enforced in the large area limit. In the case of a simplicial complex, we reproduced the semiclassical peaks in correspondence of Regge triangulations, well-known in other representations, and found a new strong constraint on the Regge deficit angles encoding curvature. In the class of limits considered here, only flat space-time geometries are allowed, up to a countable set of accidental curvatures.

Caution is needed if we want to draw conclusions about the flatness of the model from the previous analysis. The correct formulation of the semiclassical limit for the loop quantum gravity covariant dynamics might be more subtle than expected. We briefly discuss some possible scenarios.

It is possible that even though the amplitudes truncated to a finite 2-complex (the ones studied in this work) reproduce only the flat solution of Einstein equations, the continuum theory is well-defined and reproduces also the curved solutions. In this theoretical scenario the truncated theory can only allow for an infinitesimal deficit angle at the faces of the 2-complex in the semiclassical limit, which means that a finite curvature value can only be attained by adding up `many' infinitesimals, in agreement with a form of equivalence principle at the face. We leave this research direction as an open problem.

Within the truncated theory, a technical issue that we leave open is the question whether the constraints derived here, in particular the curvature constraint, continue to hold when we pass from the partial amplitudes to the full amplitudes obtained by integrating over the bulk coherent state labels. We hope to come back to this important point in the next future. It is possible that for some reason the curvature constraint does not hold for the full amplitude.

Another possible scenario is the one in which the semiclassical limit of the truncated theory is tightly related to flipped limit of small Barbero-Immirzi parameter. Indeed preliminary analyses \cite{Magliaro:2011zz,Magliaro:2011dz} suggest that the limit $\gamma\rightarrow 0$ could be used similarly to a semiclassical expansion, yielding non-trivial deficit angles. If this scenario turns out to be correct, light should be shed on the mechanism that explains the smallness of $\gamma$. Is it related to a renormalization of the Barbero-Immirzi parameter when we pass from the continuous amplitude to the amplitude truncated to finite graphs/2-complexes via coarse-graining? Is it related to the perturbative running \cite{Benedetti:2011nd}?
\section*{Acknowlegements}
Our thanks are due to Carlo Rovelli and Simone Speziale for hosting at the Centre de Physique ThŽorique de Luminy where most of the results were presented in detail on 1st October 2012. We also thank Eugenio Bianchi, Frank Hellmann and Daniele Oriti for useful comments.  
%%%%
%%%
\bibliography{biblioholomorphic}
%%%%
%%%
\appendix
\section{Proofs of section \ref{section: semiclassical vertex}}
\begin{proof}[Proof of proposition \ref{proposition: vertex}]
Let us write a $\SUtwoC$ coherent state label in the following way,
\begin{align}\label{BMPsplit}
H=he^{X/t^\beta}=hg(\hat X)e^{i|X|\tau_3/t^\beta}g(\hat X)^{-1},
\end{align}
where $g(\hat X)$ is a $SU(2)$ rotation that brings the reference unit 3-vector $(0,0,1)$ on the unit vector $\hat X$. There are infinitely many rotations with this property, so we assume this $U(1)$ ambiguity to be fixed by an arbitrary choice of a section $\hat X\rightarrow g(\hat X)$ of the Hopf bundle. Notice that however the equation \eqref{BMPsplit} depends only on a relative phase between $g(\hat X)$ and its inverse, thus it is independent of this choice.

As observed in \cite{Bianchi:2009ky}, the complexified diagonal $SU(2)$ representation matrix is dominated by the largest magnetic numbers for $t\rightarrow 0$, that is
\begin{align}
D^j_{mm'}(e^{i|X|\tau_3/t^\beta})=e^{j|X|/t^\beta}\delta_{mm'}\delta_{mj}(1+\mathcal O(t^\infty)\big),
\end{align}
where the notation $\mathcal O(t^\infty)$ means $\mathcal O(t^\alpha)$ for all $\alpha>0$. Substituting this in the expression \eqref{BMPsplit} for the coherent state label we get the approximation
\begin{align}\label{approximation of DjH}
D^j(H)=e^{j|X|/t^\beta}h\triangleright|j,\hat X\rangle\langle j,\hat X|\big(1+\mathcal O(t^\infty)\big),
\end{align}
where $|j,\hat X\rangle$ is a Bloch $SU(2)$ coherent state for the direction $\hat X$ in the spin-$j$ representation. The holomorphic wedge amplitude \eqref{holomorphic wedge} is easily computed,
\begin{align}
P_t(G,H)=\sum_j(2j+1)e^{-tj(j+1)}\text{Tr}[Y^\dagger D^{\gamma j,j}(G^{-1})YD^{j}(H)],
\end{align}
which using \eqref{approximation of DjH} can be approximated in the following way,
\begin{align}\label{holomorphic wedge approx}
P_t(G,H)\approx\sum_j(2j+1)e^{-tj(j+1)}e^{j|X|/t^\beta}\langle j,\hat X|Y^\dagger G^{-1}Yh|j,\hat X\rangle,
\end{align}
with relative error of order $\mathcal O(t^\infty)$. In order to study the peakedness properties of the vertex amplitude we need to write more explicitly the bracket in the previous formula \eqref{holomorphic wedge approx} for the wedge amplitude. It is sufficient to write explicitly the injection $Y$ and the inner product of the $SL(2,\mbb C)$ irreducible Hilbert space.

Thus we realize the Hilbert space $\mathcal H^{(j,\gamma j)}$ in the usual way as the space of homogeneous functions of two complex variables $z\in\mbb C^2$ of degree $(j,\gamma j)$, with the inner product
\begin{align}
(f,g):=\int_{\mbb C\mbb P^1}dz\,\omega(z)\overline{f(z)}g(z).
\end{align}
The measure $z\,\omega(z)$ is the standard invariant 2-form. The integration on the complex projective line $\mbb C\mbb P^1$ is well-defined because the 1-form we are integrating is scale-invariant. In other words, the variables $z=(z_1,z_2)\in\mbb C^2$ can be interpreted as homogeneous coordinates $[z_1:z_2]$ for $\mbb C\mbb P^1$. The standard norm of $z\in\mbb C^2$ is $|z|:=\sqrt{z^\dagger z}$. Also the Bloch coherent state in the fundamental representation belongs to $\mbb C^2$, thus we use the simpler notation $x$ for the vector $\ket{\frac{1}{2},\hat X}$. The image the homogeneous function realization of a Bloch SU(2) coherent state \cite{Barrett:2009mw} under the injection map $Y$ is the following,
\be\label{EPRLhom}
Y\ket{j,\hat X}\leadsto\sqrt\frac{2j+1}{\pi}\,|z|^{2i\gamma-2j-2}(z^\dagger x)^{2j}.
\ee
Using this expression in \eqref{holomorphic wedge approx}, we can write the holomorphic vertex amplitude \eqref{holomorphic vertex explicit} as
\begin{align}\label{vertex with action}
W_t(H_{vf})\approx\alpha_v\sum_{j_{vf}}\int_{SL(2,\mbb C)} dG_{ve}\int_{\mbb{CP}^1}dz_{vf} \mu_v\,e^{S_v},
\end{align}
where the function $S_v$, that we shall call vertex action, is the sum of three terms: $S_v=S^0_v+S'_v+S''_v$, where
\begin{align}\label{S0}
&S^0_v=\sum_{f\supset v}S^0_{vf}=2i\gamma  \sum_{f\supset v}j_{vf}\log Q^0_{vf},\\\label{S'}
&S'_v=\sum_{f\supset v}S'_{vf}=2\sum_{f\supset v}j_{vf}\log Q'_{vf},\\\label{S''}
&S''_v=\sum_{f\supset v}S''_{vf}=-t\sum_{f\supset v}\Big(j_{vf}-\frac{|X_{vf}|}{2t^{\beta+1}}\Big)^2.
\end{align}
The holomorphic vertex action $S_v$ is a function of the configuration $(j_{vf},z_{vf},G_{vf})$ we are integrating over.
The quantities inside the logarithm are easily computed using \eqref{EPRLhom}. These are
\begin{align}\label{Q}
&Q^0_{vf}=\frac{|G^\dagger_{v,s(vf)}z_{vf}|}{|G^\dagger_{v,t(vf)}z_{vf}|},\\\label{Q'}
&Q'_{vf}=\frac{x^\dagger_{vf}h^{\dagger}_{vf}G^\dagger_{v,t(vf)}z_{vf}}{|G^\dagger_{v,t(vf)}z_{vf}|}\frac{z^\dagger_{vf} G_{v,s(vf)} x_{vf}}{|G^\dagger_{v,s(vf)}z_{vf}|}.
\end{align}
The measure factor $\mu_v$ and the overall factor $\alpha_v$ are given by
\be
\mu_v=\prod_{f\supset v} (2j_{vf}+1)\frac{\sqrt\frac{2j_{vf}+1}{\pi}\,\omega(z_{vf})}{|G^\dagger_{v,t(vf)}z_{vf}|^2\,|G^\dagger_{v,s(vf)}z_{vf}|^2},
\ee
and
\begin{align}\label{overall alpha}
\alpha_v=\prod_{f\supset v}\exp{\frac{|X_{vf}|^2}{4t^{\beta+2}}}.
\end{align}
The overall factor $\alpha_v$ comes from completing the squares to form the Gaussian term $S''_v$ of the action. Notice that $S^0_v$ is purely imaginary while $S'_v$ is complex. Moreover we have 
\begin{align}
 |x^\dagger G z|\leq |x||G z|=|G z|,
\end{align}
for all normalized spinors $x$, which implies that $\text{Re }S_{vf}=\text{Re }S_{vf}'+S''_{vf}\leq 0$. In particular, the real part of the total vertex action $S_v$ is never positive. 

Heuristically, since the Gaussian term $S''_v$ peaks the sum on the large spins
\begin{align}
j_{vf}\simeq j^0_{vf}\simeq \frac{|X_{vf}|}{2t^{\beta+1}},
\end{align}
and the other two terms $S^0_v$ and $S'_v$ depend linearly on the spins, we realize that the contribution of a configuration $(j_{vf},z_{vf},G_{vf})$ to the integral is exponentially small in $t$ unless the real part of $S_v$ vanishes. Moreover, since the complex action $S_v$ is rapidly oscillating for $t\rightarrow 0$ in a neighborhood of $j^0_{vf}$, the holomorphic vertex amplitude is dominated by the stationary configurations of $S_v$. We call critical configuration a configuration $(j_{vf},z_{vf},G_{vf})$ for which the real part \emph{and} the gradient of $S$ vanish,
\begin{align}\label{ReS0}
\text{Re }S_v(j_{vf},z_{vf},G_{vf})&=0,\\\label{dS0}
\delta S_v(j_{vf},z_{vf},G_{vf})&=0,
\end{align}
where the variation is taken independently with respect to all the variables $j_{vf}$, $z_{vf}$ and $G_{ve}$. If there are no critical configurations, the integral is suppressed faster than any power of $t$. Considering the expression \eqref{overall alpha} of the overall factor $\alpha$, we find that the holomorphic vertex amplitude is non-suppressed in the sense of definition \ref{definition: suppression} if and only if the critical equations \eqref{ReS0}, \eqref{dS0} hold.

Notice that we are allowed to take variations with respect to the discrete variables $j_{vf}$ because they become quasi-continuous in the limit $t\rightarrow 0$, as for each wedge the density of spins $j_{vf}$ increases with respect to the interval $j^0_{vf}\pm \sqrt t$, where the summand is essentially different from zero. This heuristic argument can be made rigorous using the Euler-Maclaurin formula for the asymptotic approximation of the sum with the integral.

Let us start analyzing the solutions of the critical equations. The first equation \eqref{ReS0} is satisfied if and only if $\text{Re }S^0_{vf}=\text{Re }S'_{vf}=\text{Re }S''_{vf}=0$ for every face $f\supset v$. Using the previous explicit expressions \eqref{S0}, \eqref{S'}, \eqref{S''}, this happens if and only if
\begin{align}\label{re0Gauss}
&j_{vf}=\frac{|X_{vf}|}{2t^{\beta+1}},
\end{align}
and
\begin{align}
\frac{|x^\dagger_{vf}h^\dagger_{vf}G^\dagger_{v,t(vf)}z_{vf}|}{|G^\dagger_{v,t(vf)}z_{vf}|}&=1,\\
\frac{|z^\dagger_{vf} G_{v,s(vf)} x_{vf}|}{|G^\dagger_{v,s(vf)}z_{vf}|}&=1,
\end{align}
which may be rewritten as
\begin{align}\label{re01}
&\frac{G^\dagger_{v,t(vf)}z_{vf}}{|G^\dagger_{v,t(vf)}z_{vf}|}=e^{i\phi_{vf,t(vf)}}h_{vf}x_{vf},\\\label{re02}
&\frac{G^\dagger_{v,s(vf)}z_{vf}}{|G^\dagger_{v,s(vf)}z_{vf}|}=e^{i\phi_{vf,s(vf)}}x_{vf}.
\end{align}

Now let us analyze the the stationary phase equations \eqref{dS0} of the action $S_v$. For the variation with respect to the complex variables $z_{vf}$ we can consider independent variations with respect to $z_{vf}$ and $z^{\dagger}_{vf}$. If we do so we get easily
\begin{align}\label{varz}
&\delta_{z_{vf}}S_v=0 \quad\text{\emph{iff}} \quad2 \frac{x^\dagger_{vf}h^\dagger_{vf}G^\dagger_{v,t(vf)}}{x^\dagger_{vf}h^\dagger_{vf}G^\dagger_{v,t(vf)}z_{vf}}-\frac{z^\dagger_{vf}G_{v,t(vf)}G^\dagger_{v,t(vf)}}{|G^\dagger_{v,t(vf)}z_{vf}|^2}-\frac{z^\dagger_{vf}G_{v,s(vf)}G^\dagger_{v,s(vf)}}{|G^\dagger_{v,s(vf)}z_{vf}|^2}=0,
\end{align}
and
\begin{align}\label{varzdagger}
&\delta_{z^\dagger_{vf}}S_v=0 \quad\text{\emph{iff}} \quad2 \frac{G_{v,s(vf)}x_{vf}}{z_{vf}^{\dagger}G_{v,s(vf)}x_{vf}}-\frac{G_{v,t(vf)}G^\dagger_{v,t(vf)}z_{vf}}{|g^\dagger_{v,t(vf)}z_{vf}|^2}-\frac{G_{v,s(vf)}G^\dagger_{v,s(vf)}z_{vf}}{|g^\dagger_{v,s(vf)}z_{vf}|^2}=0.
\end{align}
Using \eqref{re01} and \eqref{re02}, the equations \eqref{varz} and \eqref{varzdagger} collapse into the single equation
\begin{align}\label{sfeom1}
\frac{G_{v,t(vf)} h_{vf}x_{vf}}{e^{-i\phi_{vf,t(vf)}}|G^\dagger_{v,t(vf)}z_{vf}|}=\frac{G_{v,s(vf)} x_{vf}}{e^{-i\phi_{vf,s(vf)}}|G^\dagger_{v,s(vf)}z_{vf}|},
\end{align}
where $\phi_{vfe}$ are phases. Notice also that eliminating $z_{vf}$ from \eqref{re01} and \eqref{re02} we get
\begin{align}\label{sfeom2}
\frac{G^*_{v,t(vf)} h_{vf}x_{vf}}{e^{-i\phi_{vf,t(vf)}}|G^\dagger_{v,s(vf)}z_{vf}|}=\frac{G^*_{v,s(vf)} x_{vf}}{e^{-i\phi_{vf,s(vf)}}|G^\dagger_{v,t(vf)}z_{vf}|}.
\end{align}

Let us study the variational equation with respect to the group variables $G_{ev}$, in the case $\epsilon_{vef}=1$ for all $f\supset e$. This yields the following,
\begin{align}
&\delta_{G_{ev}} S_v=0\quad\text{\emph{iff}}\quad\sum_{f\supset e} j_f\Big(2\frac{z^\dagger_{vf}G_{ve}Lx_{vf}}{z^\dagger_{vf}G_{ve}x_{vf}}-\frac{z^\dagger_{vf}G_{ve}Lg^\dagger_{ve}z_{vf}}{|g^\dagger_{ve}z_{vf}|^2}-\frac{z^\dagger_{vf}g_{ve}L^\dagger G^\dagger_{ve}z_{vf}}{|G^\dagger_{ve}z_{vf}|^2}\Big)=0,
\end{align}
where $L\in sl(2,\mbb C)$ is an arbitrary element of the Lorentz algebra. Using the conditions \eqref{re0Gauss}, \eqref{re01} and \eqref{re02} the previous equation becomes simply
\begin{align}\label{withL}
\sum_{f\supset e}|X_{vf}| \big(x^\dagger_{vf}Lx_{vf}-x^\dagger_{vf}L^\dagger x_{vf}\big)=0.
\end{align}
Now using $sl(2,\mbb C)\simeq su(2)\oplus i\,su(2)$ we write
\begin{align}
L=\vec\alpha\cdot\frac{\vec\sigma}{2}+i \vec\beta\cdot\frac{\vec\sigma}{2},
\end{align}
with $\vec\sigma$ the Pauli matrices. Since \eqref{withL} must hold for every variation, namely for all $\vec\alpha$ and $\vec\beta$ in $\mbb R^3$, the group variational equation is equivalent to the closure condition
\begin{align}
\sum_{f\supset e}X_{vf}=0,
\end{align}
where we have used $x^\dagger\vec\sigma x=\hat X$. A similar result holds for general orientations $\epsilon_{vef}$. In the general case the closure condition reads
\begin{align}\label{sfeom3}
\sum_{f_\text{in}\supset e}X_{vf}-\sum_{f_\text{out}\supset e}h_{vf}\triangleright X_{vf}=0,
\end{align}
which is the constraint \eqref{closure} of proposition \ref{proposition: vertex}.

The equations \eqref{sfeom1} and \eqref{sfeom2} can be easily casted into the transport equation
\begin{align}\label{transport equation proof}
U(h^{-1}_{vf}G_{t(vf),v}G_{v,s(vf)})=e^{\xi_{vf}^* B(X_{vf})+\xi_{vf} *B(X_{vf})},
\end{align}
where
\begin{align}
\xi_{vf}&:=2\log Q^0_{vf},\\
\xi^*_{vf}&:=-2i\log Q'_{vf}=2(\phi_{vf,t(vf)}-\phi_{vf,s(vf)}).
\end{align}
Notice that $\xi_{vf}$ is a real number (the rapidity of a boost), and $\xi^*_{vf}$ is a phase (a rotation angle). 

Finally, the last stationary phase equation comes from the spin variations and reads
\begin{align}\label{spin variational}
\delta_{j_{vf}}S_v=0 \quad\text{\emph{iff}} \quad i\gamma\xi_{vf} + i\xi^*_{vf}\mod 4\pi i=t\Big(j_{vf}-\frac{|X_{vf}|}{2t^2}\Big),
\end{align}
where the right-hand side vanishes by \eqref{re0Gauss}. Notice that since the phase $\xi^*_{vf}$ is defined by the spinorial equations \eqref{re01}, \eqref{re02}, the correct periodicity is $4\pi$. Thus the spin variational equation becomes simply
\begin{align}
\gamma\xi_{vf} + \xi^*_{vf}=0\mod 4\pi,
\end{align}
which together with \eqref{transport equation proof} gives the constraint \eqref{wedge relation} of proposition \ref{proposition: vertex}, and concludes the proof.
\end{proof}

%%%%%
%%%%%
%%%%%

\section{Proofs of section \ref{section: semiclassical face}}
\begin{proof}[Proof of lemma \ref{proposition: face amplitude}] Consider a face $f$ of the 2-complex. Notice that in order to prove the result we can study equivalently the complex conjugate of the face amplitude. Using the approximation \eqref{approximation of DjH}, the complex conjugate of the face amplitude for the face can be rewritten disregarding an exponentially small relative contribution as
\begin{align}
K_{V(f)t}(\prod_{v\subset f}H_{fv})\approx \alpha_f\sum_{j_f} \mu_f\,e^{S_f},
\end{align}
with face action $S_f=S'_f+S''_f$ defined by
\begin{align}
&S'_f=\sum_{e\subset f}S'_{fe}=2 j_f \sum_{e\subset f} \log Q_{fe},\\
&S''_f=-t\sum_{v\subset f}\Big(j_f-\frac{|X_{vf}|}{2t^{\beta+1}}\Big)^2,
\end{align}
where the measure is simply $\mu_f=2j_f+1$, the overall factor $\alpha_f$ is 
\begin{align}
\alpha_f=\prod_{v\subset f}\exp{\frac{|X_{vf}|^2}{4t^{\beta+2}}},
\end{align}
and the argument of the logarithm is given by
\begin{align}\label{Qfe}
Q_{fe}=x_{t(e)f}^\dagger h_{s(e)f}x_{s(e)f}.
\end{align}
Remember that the normalized spinor $x_{vf}$ is defined as the Bloch coherent state $|1/2,\hat X_{vf}\rangle$ in the fundamental representation, up to a phase ambiguity that is canceled out by the conjugate state $x^\dagger_{vf}$. Thus \eqref{Qfe} is well-defined. Rather, the relative face between the source coherent state, transported by $h_{vf}$ to the target point, and the target state is defined unambiguously by the Ashtekar holonomy $h_{vf}$ itself.

As we did for the holomorphic vertex amplitude, the non-suppressed configurations must satisfy the critical equations. Now there is a single configuration variable, the spin $j$, we are integrating over. Moreover, the spin is quasi-continuous in the semiclassical limit $t\rightarrow 0$ and we can approximate the sum with an integral. Similarly to the vertex amplitude, the real part of the action is non-positive and vanishes if and only if\begin{align}\label{face first}
h_{s(e)f}x_{s(e)f}=e^{i\phi_{fe}/2} x_{t(e)f},
\end{align}
for all edges $e\subset f$, and
\begin{align}\label{face second}
j_f=\frac{|X_{vf}|}{2t^{\beta+1}},
\end{align}
for all vertices $v\subset f$. Notice that the first equation \eqref{face first} is a gluing constraint for directions of the fluxes at the interface of two vertices. The second equation \eqref{face second} implies in particular the area matching constraint, namely the moduli of the fluxes labeled by the same face must be equal. The previous two equations are thus equivalent to the flux glueing constraint, equation \eqref{glueing constraint} of the proposition.

The second critical equation is the spin variational equation, which yields
\begin{align}\label{face spin var}
\delta_j S_f=0 \quad\text{\emph{iff}} \quad \sum_{e\subset f}\phi_{fe}=0\mod 4\pi,
\end{align}
where the phases are defined in \eqref{face first}. Since equation \eqref{face first} implies that
\begin{align}
(\prod_{v\subset f}h_{vf})x_{vf}=e^{i\sum_{e\subset f}\phi_{fe}/2}x_{vf},
\end{align}
where the cyclic product is ordered according to the orientation of the face and the choice of the first element of the product, here labeled with a generic $vf$, is arbitrary, the variational equation \eqref{face spin var} implies that
\begin{align}
\sum_{e\subset f}\phi_{fe}=0\mod 4\pi,
\end{align}
namely
\begin{align}
\prod_{v\subset f}h_{vf}=\mbb 1,
\end{align}
that is the Ashtekar loop holonomy is the identity. 
\end{proof}

%%%%%%%%%%
%%%%%%%%%%
\section{Proofs of section \ref{section: 4-simplex}}
%%%%%%%%%%%%
%%%%%%%%%%%%%
\begin{proof}[Proof of lemma \ref{lemma: classification vertex solutions}]
Consider boundary holonomy-flux labels $H_{vf}$ for a vertex graph $\Gamma_v$, and $SL(2,\mbb C)$ elements $G_{ve}$, one for each edge $e\subset v$, such that they determine a non-degenerate bivector geometry via \eqref{transport constraint}. Consider the family of $SL(2,\mbb C)$ elements obtained from $G_{ve}$ by rigid $SL(2,\mbb C)$ transformations, parity, and spin lift symmetry. Each set of elements $G'_{ve}$ in this family determines a non-degenerate bivector geometry. This can be easily checked by direct inspection and proves the existence part of the lemma.

To prove the uniqueness part, consider a bivector geometry, not necessarily non-degenerate, determined by the same holonomy-flux labels $H_{vf}$, and some other elements $G'_{ve}$. Now we use the following classification of the boundary data (see \cite{Barrett:2009mw}). The holonomy-flux labels $H_{vf}$ determine in particular a 3-dimensional Riemannian Regge triangulation of the 3-sphere, with unique edge lengths. Since the edge lengths fix uniquely the metric of a 4-simplex, the triangulation can be: 1) the boundary of a Lorentzian 4-simplex, 2) the boundary of a Euclidean 4-simplex, 3) the boundary of a degenerate 4-simplex. These cases are mutually exclusive for fixed edge lengths. However, the existence of the elements $G_{ve}$ which determine a non-degenerate Lorentzian 4-simplex, automatically excludes the cases 2) and 3). Moreover, for fixed edge lengths, the Lorentzian 4-simplex is determined up to Poincar\'e transformations. The spin-lift symmetry is irrelevant for the classification. Taking into account the relation between bivector geometry and geometric 4-simplices, the lemma is proven observing that the rigid $SL(2,\mbb C)$ symmetry and the parity symmetry of the elements $G'_{ve}$ encompass all the possible Poincar\'e transformations, since the space-time inversion and the translations are not registered by the bivector geometry.
\end{proof}

%%%%%%%%%%%%%%%%%%%%%
%%%%%%%%%%%%%%%%%%%
\section{Proofs of section \ref{section: Ashtekar twists}}
\begin{proof}[Proof of lemma \ref{lemma: natural decomposition of GG}]
Given a reconstructed 4-simplex $(\sigma_{v},\mu_v)$, consider a wedge $vf$ and call $e$, $e'$ the source and target edges respectively. We shall use the notation $e'e$ instead of $vf$ to denote this wedge. Notice also that $e'e$ identifies the triangle shared by the tetrahedra $e^*$ and $e'^*$.

Suppose both external normals $N_{e}$ and $N_{e'}$ are future-pointing, or past-pointing. Use the space-time inversion ambiguity to fix them to be both future-pointing. Now we have fixed a unique 4-simplex in $\Mink$, up to irrelevant translations. Applying $G_{ev}$ and $G_{e'v}$ to the tetrahedra $e$ and $e'$ respectively, we bring them back to the 3-plane orthogonal to $\mathcal N$. Notice that since we act with a $SL(2,\mbb C)$ transformation, i.e. a proper orthochronous Lorentz transformation, the future-pointing normals are both sent to $\mathcal N=(1,0,0,0)$ (and not to minus $\mathcal N$). Clearly, there is a unique diagonal $SU(2)$ rotation $g_{e'e}$ that brings the triangle $f$ of the transformed source tetrahedron onto the triangle $f$ of the transformed target tetrahedron. Their 3-dimensional external normals are automatically sent by $g_{e'e}$ in the anti-parallel configuration, for we used proper orthochronous elements to put the tetrahedra in the time-gauge. Thus the cycle $G_{ve'}g_{e'e}G_{ev}$ is a Lorentz transformation that preserves the triangle $f$ of the 4-simplex, and sends $N_e$ to $N_{e'}$. Hence we have
\begin{align}
U(G_{ve'}g_{e'e}G_{ev})=e^{|\Theta|N_{e}\wedge N_{e'}},
\end{align}
where we put the modulus to emphasize the positivity of the boost parameter. Notice that this result is independent of the inversion ambiguity.

Now consider the second case in which one 4-dimensional normal is future-pointing, and the other is past-pointing. Use the inversion ambiguity to fix the source one to be past-pointing. Doing a similar analysis, in this case we have that the cycle $G_{ve'}g_{e'e}G_{ev}$ preserves the triangle $f$ as before, but sends \emph{minus} $N_e$ to $N_{e'}$. So we must have
\begin{align}
U(G_{ve'}g_{e'e}G_{ev})=e^{|\Theta|(-N_{e})\wedge N_{e'}}=e^{\Theta N_{e}\wedge N_{e'}},
\end{align}
with $\Theta$ negative. Thus this equation is the same as for the first case, $\Theta$ being the Lorentzian dihedral angle at the triangle $e'e$. Thus we have shown the equation \eqref{natural 2}.

Multiplying this equation by $G_{ve}$ on the right and by $G_{ev}$ on the left, and taking the inverse, we arrive to
\begin{align}\label{GG=gexpGNN}
U(g^{-1}_{e'e}G_{e'v}G_{ve})=e^{-\Theta G_{ev}(N_{e}\wedge N_{e'})}.
\end{align}
By the Barrett-Crane reconstruction theorem, 
\begin{align}
N_{e}\wedge N_{e'}=*_{\mu_{v}}*_{\mu_{v}}N_{e}\wedge N_{e'}=*_{\mu_{v}}G_{ve}B(X_{e'e}),
\end{align}
and substituting this in \eqref{GG=gexpGNN} we get the alternative form \eqref{natural 1} of the decomposition, which concludes the proof.
\end{proof}

\section{Proofs of section \ref{section: 4-simplex glueing}}
\begin{proof}[Proof of proposition \ref{proposition: 4-simplex glueing}]
All the analysis in this proof is localized at a edge $e$ of the 2-complex, which is the interface of two vertices $v$ and $v'$.

Using the Barrett-Crane reconstruction and the hypotheses of the proposition we can associate two 4-simplices $(\sigma_v,\mu_v)$ and $(\sigma_{v'},\mu_{v'})$ to the two sets of holonomy-flux labels. For simplicity, we are interested to the time-gauge the two tetrahedra labeled by the same $e$, one in the 4-simplex $\sigma_v$ and one in the 4-simplex $\sigma_{v'}$. We first do this transformation at the bivectors level.

So we apply the inverse rigid transformations $G_{ve}^{-1}$ and $G_{v'e}^{-1}$ to all the bivectors $B_{f}(v)$ ($f\supset v$) and $B_f(v')$ ($f\supset v'$) respectively. Using the glueing equation \eqref{glueing constraint} found in the analysis of the anti-holomorphic face amplitude, the eight transformed-back bivectors with $f\supset e$, four in $v$ and four in $v'$, must be  identified pairwise with
\be
\frac12 a_f *(1,\vec n_{ef})\wedge(1,-\vec n_{ef}),
\ee 
where $a_f:=|X_{vf}|$, and we have defined the edge-face $\mbb R^3$ unit vectors
\begin{align}
\vec n_{ef}:=
\begin{cases}
\hat X_{v'f}=h_{vf}\hat X_{vf},\quad \epsilon_{evf}=+1\\
\hat X_{vf}=h_{v'f}\hat X_{v'f},\quad \epsilon_{evf}=-1
\end{cases}
\end{align}
which close to zero according to the equation
\begin{align}
\sum_{f\supset e}\epsilon_{vef}a_f\vec n_{ef}=0.
\end{align}

Thus besides $(\sigma_v,\mu_v)$ and $(\sigma_{v'},\mu_{v'})$, we are interested in the oriented 4-simplices $(\tilde\sigma_v,\mu_v)$ and $(\tilde\sigma_{v'},\mu_{v'})$ obtained by the Barrett-Crane reconstruction from the $20=10+10$ bivectors $G_{ve}^{-1}B_{f}(v)$ and $G_{v'e}^{-1}B_{f}(v')$. The tilde serves to denote the 4-simplices which are Lorentz-transformed-back, where the tetrahedron $e$ is in the time-gauge. Notice that a $SL(2,\mbb C)$ transformation cannot change the reconstructed orientations.

The external unit normals $N_{e}(v)$ and $N_{e}(v')$ of the 4-simplices $\tilde\sigma_v$ and $\tilde\sigma_{v'}$ are given by $(\pm 1,0,0,0)$ depending on which inversion-related 4-simplices are chosen in the reconstruction. Fix the inversion ambiguity using a choice parametrized with $p_v=\pm1$:
\be\label{realization}
N_{e}(v)=p_v(1,0,0,0),\quad N_{e}(v')=p_{v'}(1,0,0,0).
\ee
The two tetrahedra $e$ in $\tilde\sigma_v$ and $\tilde\sigma_{v'}$ are in the subspace orthogonal to $(1,0,0,0)$, that can be identified with $\mbb R^3$ by erasing the time coordinate. Using this identification, we have that in the case the reconstructed orientations are $\mu_v=\mu_{v'}=1$, the four unit vectors $p_v\epsilon_{vef}\vec n_{ef}$ are the external normals of the tetrahedron $e$ in $\tilde\sigma_v$, and the unit vectors $p_{v'}\epsilon_{v'ef}\vec n_{ef}$ are the external normals of the second tetrahedron $e$ in $\tilde\sigma_{v'}$. Notice that $\epsilon_{vef}=-\epsilon_{v'ef}$. Thus we have two cases. If the inversion choice is $p_v=-p_{v'}$, then the two tetrahedra are the same up to translations and the 4-dimensional external normals are the same. Whereas if $p_v=p_{v'}$ then the two tetrahedra are related by spatial inversion $\vec x\rightarrow-\vec x$, and the 4-dimensional normals are related by time inversion, thus the tetrahedra \emph{and} their normals are related by a space-time inversion. 

Thus we have shown that given two adjacent reconstructed oriented 4-simplices $(\sigma_v,+)$ and $(\sigma_{v'},+)$, there is a $SO(1,3)$ transformation given by $\mu_e \text{Ad}(G_{v'e}G_{ev})$ that together with a translation glues the corresponding tetrahedra and brings their external normals to the antiparallel configuration. The inversion sign $\mu_e:=p_v p_{v'}$ depends on the way we choose to fix the inversion ambiguity. By \eqref{tetrahedraglueing} and \eqref{normalsglueing} this transformation is the Levi-Civita holonomy, which proves the first case in \eqref{eq: glueing theorem}.

Now let us analyze the other possible orientations. The previous argument can be repeated for the case $\mu_{v}=\mu_{v'}=-1$. So let us analyze the remaining two cases $\mu_v\neq\mu_{v'}$. Now if we choose $p_v=-p_{v'}$, we have that the external 3-normals of the two tetrahedra are opposite, and the external 4-normals are opposite. Thus the parity transformation $P:\vec x\rightarrow-\vec x$, namely the spatial inversion, sends the tetrahedron into the other tetrahedron leaving the 4-normals antiparallel. If we choose another realization of the inversion ambiguity such that $p_v=p_{v'}$, we have that the time inversion $T: t\rightarrow -t$, which is also the composition of parity with an inversion, makes the same job. This concludes the proof of the second case in \eqref{eq: glueing theorem}.
\end{proof}

\end{document}